\newcommand{\myparagraph}[1]{\medskip \noindent \textbf{#1}}
\begin{document}
\title{\brelse{}: Efficient Relational Symbolic Execution for Constant-Time at Binary-Level}

\author{
  \IEEEauthorblockN{Lesly-Ann Daniel\IEEEauthorrefmark{1},
    Sébastien Bardin\IEEEauthorrefmark{1},
    Tamara Rezk\IEEEauthorrefmark{2}}\\
  \IEEEauthorblockA{\IEEEauthorrefmark{1}   
    CEA, List, Université Paris-Saclay, France}
  \IEEEauthorblockA{\IEEEauthorrefmark{2}
    INRIA Sophia-Antipolis, INDES Project, France}\\
    lesly-ann.daniel@cea.fr, sebastien.bardin@cea.fr, tamara.rezk@inria.fr
}

\maketitle

\begin{abstract}
  
  The constant-time programming discipline (CT)  is an efficient
  countermeasure against timing side-channel attacks, requiring the
  control flow and the memory accesses to be independent from the
  secrets.
  Yet, writing CT code  is challenging as it demands to reason about pairs of 
  execution traces (2-hypersafety property) and it is generally not
  preserved by the compiler, requiring binary-level analysis. 
  Unfortunately, current verification tools for CT
  either reason at higher level (C or LLVM), or sacrifice bug-finding
  or bounded-verification, or do not scale.
  We tackle the problem of designing an efficient binary-level
  verification tool for CT providing both bug-finding and
  bounded-verification.
  The technique builds on relational symbolic execution enhanced with
  new optimizations dedicated to information flow and binary-level
  analysis, yielding a dramatic improvement \modif{compared to}{over}
  prior work based on symbolic execution. %
  We implement a prototype, \brelse{}, and perform extensive
  experiments on a set of 338 cryptographic implementations,
  demonstrating the benefits of our approach in both bug-finding and
  bounded-verification.
  Using \brelse{}, we also automate %
  a previous manual study of CT preservation by
  compilers. Interestingly, we discovered that \texttt{gcc -O0} and
  backend passes of \texttt{clang} introduce violations of CT in
  implementations that were previously deemed secure by a
  state-of-the-art CT verification tool operating at LLVM level,
  showing the importance of reasoning at binary-level.
\end{abstract}

\section{Introduction}\label{sec:intro} %
Timing channels occur when timing variations in a sequence of events
depends on secret data. They can be exploited by an attacker to
recover secret information such as plaintext data or secret keys. %
\emph{Timing attacks}, unlike other side-channel attacks (e.g based on
power-analysis, electromagnetic radiation or acoustic emanations) do
not require special equipment and can be performed
remotely~\cite{brumley_remote_2005,atluri_remote_2011}. %
First timing attacks exploited secret-dependent \emph{control flow}
with measurable timing differences to recover secret
keys~\cite{kocher_timing_1996} from cryptosystems. %
With the increase of shared architectures (e.g.~infrastructure as a
service) arise more powerful attacks where an attacker can monitor the
cache of the victim and recover information on secret-dependent
\emph{memory
  accesses}~\cite{bernstein_cache-timing_2005,hutchison_cache_2006,percival_cache_2009}.

Therefore, it is of paramount importance to implement adequate
countermeasures to protect cryptographic implementations from these
attacks. %
Simple countermeasures consisting in adding noise or dummy
computations can reduce timing variations and make attacks more
complex. %
Yet, these mitigations eventually become vulnerable to new generations
of attacks and provide only \emph{pseudo}
security~\cite{ronen_pseudo_2018}. %

The \emph{constant-time} programming discipline
(CT)~\cite{barthe_system-level_2014}, a.k.a.~constant-time policy, is
a software-based countermeasure to timing attacks which requires the
control flow and the memory accesses of the program to be independent
from the secret input\footnote{\modif{Also, operands of instructions
    that execute in variable time (e.g., integer division) are
    sometimes required to be independent from secret.}{Some versions
    of CT also require that the size of operands of variable-time
    instructions (e.g.~integer division) is independent from
    secrets.}}. %
Constant-time has been proven to protect against cache-based timing
attacks~\cite{barthe_system-level_2014}, making it the most effective
countermeasure against timing attacks,  %
 already widely used to secure cryptographic implementations
(e.g. BearSSL~\cite{pornin_bearssl_nodate},
NaCL~\cite{hutchison_security_2012},
HACL*~\cite{zinzindohoue_hacl*:_2017}, etc). %

\myparagraph{Problem.} %
Writing constant-time code is complex as it requires low-level
operations deviating from traditional programming
behaviors. %
Moreover, this effort is brittle as it is generally not preserved by
compilers~\cite{simon_what_2018,foresti_when_2016}.  For example,
reasoning about CT requires to know whether the code
\codeinline{c=(x<y)-1} will be compiled to branchless code, but this
depends on the compiler version and
optimization~\cite{simon_what_2018}.
As shown in the attack on a ``constant-time'' implementation of
elliptic curve Curve25519~\cite{foresti_when_2016}, writing
constant-time code is error
prone~\cite{foresti_when_2016,simon_what_2018,ronen_pseudo_2018}.

Several CT-analysis tools have been proposed to analyze source
code~\cite{bacelar_almeida_formal_2013,blazy_verifying_2017}, or LLVM
code~\cite{almeida_verifying_2016,brotzman_casym:_2019},
\modif{counting on the compiler to preserve the property and
  leaving}{but leave} the gap opened for violations introduced in the
executable code either by the compiler~\cite{simon_what_2018} or by
closed-source libraries~\cite{foresti_when_2016}.
Binary-level tools for CT \modif{that use}{using} dynamic
aproaches~\cite{langley_imperialviolet_2010,chattopadhyay_quantifying_2017,
  wang_cached:_2017,wichelmann_microwalk:_2018} can find bugs but miss
vulnerabilities in unexplored portions of the code, making them
incomplete; while static
approaches~\cite{kopf_automatic_2012,doychev_cacheaudit:_2015,doychev_rigorous_2017}
cannot report precise counterexamples -- making them of minor interest
when the implementation cannot be proven secure.  Aside from a
posteriori analysis, correct-by-design
approaches~\cite{almeida_jasmin:_2017,bond_vale:_2017,cauligi_fact:_2017,zinzindohoue_hacl*:_2017}
require to reimplement cryptographic primitives from scratch,
and OS-based
countermeasures~\cite{zhou_software_2016,liu_catalyst:_2016,ge_time_2019,gruss_strong_2017}
incur runtime overhead and require specific OS- or hardware-support.

\myparagraph{Challenges.} %
Two main challenges arise in the verification of constant-time:
\begin{itemize}
\item First, common verification methods do not directly apply because
  information flow properties like CT are not regular safety
  properties but 2-hypersafety
  properties~\cite{clarkson_hyperproperties_2010} (i.e.\ relating two
  execution traces), and their standard reduction to safety on a
  transformed program,
  \emph{self-composition}~\cite{barthe_secure_2004}, is
  inefficient~\cite{terauchi_secure_2005};

\item Second, it is notoriously difficult to adapt formal methods to
  binary-level because of the lack of structure information (data \&
  control) and the explicit representation of the memory as a single
  large array of
  bytes~\cite{DBLP:journals/toplas/BalakrishnanR10,DBLP:conf/fm/DjoudiBG16}.
\end{itemize}

\noindent A technique that scales well on binary code and that
naturally comes into play for bug-finding and bounded-verification is
\emph{symbolic execution}
(SE)~\cite{godefroid_sage:_2012,cadar_symbolic_2013}. While it has
proven very successful for standard safety
properties~\cite{bounimova_billions_2013}, its direct adaptation to CT
and other 2-hypersafety properties through (variants of)
self-composition suffers from a scalability
issue~\cite{balliu_automating_2014,do_exploit_2015,milushev_noninterference_2012}.
Some recent approaches achieve better scaling, but at the cost of
sacrificing either
bounded-verification~\cite{wang_cached:_2017,subramanyan_verifying_2016}
(under-approximation) %
or bug-finding~\cite{brotzman_casym:_2019}
(over-approximations).  %

The idea of \modif{synchronizing two executions}{analyzing pairs of
  executions} for the verification of 2-hypersafety is not new
(e.g.~relational Hoare logic~\cite{benton_simple_2004},
self-composition~\cite{barthe_secure_2004}, product
programs~\cite{butler_relational_2011}, multiple
facets~\cite{austin_flanagan_2012, ngo_etal_2018}). In the context of
symbolic execution, it has first been coined as
\emph{ShadowSE}~\cite{palikareva_shadow_2016} for back-to-back
testing, and later as \emph{relational symbolic execution}
(RelSE)~\cite{farina_relational_2017}.  However, a direct adaptation
of this technique \emph{does not scale in the context of binary-level
  analysis} because of the representation of the memory as a single
large array which prevents sharing between executions, sending \emph{a
  high number of queries} to the constraint solver which could have
been simplified beforehand with a better information flow tracking.

\myparagraph{Proposal.} %
\emph{We tackle the problem of designing an efficient symbolic
  verification tool for constant-time at binary-level that leverages
  the full power of symbolic execution without sacrificing correct
  bug-finding nor bounded-verification.}  We present \brelse{}, the
first efficient binary-level automatic tool for bug-finding and
bounded-verification of constant-time at binary-level.  It is
compiler-agnostic, targets x86 and ARM architectures and does not
require source code.

The technique is based on \emph{relational symbolic
  execution}~\cite{palikareva_shadow_2016,farina_relational_2017}.  It
models two execution traces following the same path in the same
symbolic execution instance %
and \emph{maximizes sharing between them}. We show via experiments
(\cref{sec:scalability}) that \modif{ShadowSE}{RelSE (and ShadowSE)}
alone does not scale at binary-level to analyze CT on real
cryptographic implementations.

Our key technical insights are (1) to use this sharing to track
secret-dependencies and reduce the number of queries sent to the
solver, and (2) to complement it with dedicated optimizations offering
a fine-grained information flow tracking in the memory for efficient
binary analysis.

\brelse{} can analyze about 23 million instructions in 98 min,
(i.e.\ 3860 instructions per second), outperforming similar state of
the art binary-level verification tools based on symbolic
execution~\cite{subramanyan_verifying_2016,wang_cached:_2017}
(cf.~\cref{tab:comparison_se}, page~\pageref{tab:comparison_se}),
while being still correct and complete for CT.

\myparagraph{Contributions.}  Our contributions are the following:
\begin{itemize}

\item We design dedicated optimizations for information flow analysis
  at binary-level. First, we complement relational symbolic execution
  with a new \emph{on-the-fly} simplification for \emph{binary-level}
  analysis, to track secret-dependencies and maximize sharing in the
  memory (\cref{sec:row}).  \modif{and}{Second, we design} new
  simplifications for \emph{information flow} analysis: untainting
  (\cref{sec:untainting}) and fault-packing (\cref{sec:fp}).
  Moreover, we formally prove that our analysis is correct for
  bug-finding and bounded-verification of CT (\cref{sec:proofs}).

\item We propose a verification tool named \brelse{} for CT analysis
  (\cref{sec:implem}).  Extensive experimental evaluation (338
  samples) %
  against standard approaches based on self-composition and
  \modif{standard}{} RelSE (\cref{sec:scalability}) shows that it can
  find bugs in real-world cryptographic implementations much faster
  than these techniques (\(\times 700\) speedup) and can achieve
  bounded-verification when they timeout, \modif{achieving}{with}
  performances close to standard SE (\(\times 1.8\) overhead).

\item In order to prove the effectiveness of \brelse{}, we perform an
  extensive analysis of CT at binary-level. In particular, we analyze
  296 cryptographic implementations previously verified at a
  higher-level (incl.~codes from
  HACL*~\cite{zinzindohoue_hacl*:_2017},
  BearSSL~\cite{pornin_bearssl_nodate},
  NaCL~\cite{hutchison_security_2012}), we replay known bugs in 42
  samples (incl.~Lucky13~\cite{al_fardan_lucky_2013}) %
  and automatically generate counterexamples
  (\cref{sec:effectiveness}).

\item Simon \emph{et al.}~\cite{simon_what_2018} have demonstrated that
  \texttt{clang}'s optimizations break constant-timeness of code. We
  extend this work in four directions -- from 192
  in~\cite{simon_what_2018} to 408 configurations
  (\cref{sec:effectiveness}):
   (1) we automatically analyze the code that was manually checked
    in~\cite{simon_what_2018},
   (2) we add new implementations, 
   (3) we add the \texttt{gcc} compiler and a more recent version of
    \texttt{clang}, 
    (4) we add ARM binaries.
    Interestingly, we discovered that \texttt{gcc -O0} and backend
    passes of \texttt{clang} with \texttt{-O3 -m32 -march=i386}
    introduce violations of CT that cannot be detected by LLVM
    verification tools like ct-verif~\cite{almeida_verifying_2016}.
\end{itemize}

\noindent Our technique is shown to be highly efficient on bug-finding
and bounded-verification compared to alternative approaches, paving
the way to a systematic binary-level analysis of CT on cryptographic
implementations, while our experiments demonstrate the importance of
developing CT-verification tools reasoning at
binary-level. %
Besides CT, the technique can be readily adapted to other
hyperproperties of interest in security (e.g., cache-side channels,
secret-erasure), %
as long as they are restricted to pairs of traces following the same
path.

\section{Background}\label{sec:background} %
In this section, we present the basics of contant-time and symbolic
execution.
Small examples of CT and standard adaptations of symbolic execution
are presented in \cref{sec:motivating}, while a formal definition of
CT is given in \cref{sec:concrete_semantics}.

\subsection{Constant-Time}

Information flow policies regulate the transfer of information between
public and secret domains. To reason about information flow, we
partition the program input into two disjoint sets: \emph{low}
(i.e.\ public) and \emph{high} (i.e.\ secret).
Typical information flow properties require that the observable output
of a program does not depend on the high input
(\textit{non-interference}). CT is a special case requiring both the
program control flow and the memory accesses to be independent from
high input.

Contrary to a standard \emph{safety} property which states that
nothing bad can happen along \emph{one execution trace}, information
flow properties relate \emph{two execution traces} -- they are
\emph{2-hypersafety properties}~\cite{clarkson_hyperproperties_2010}.
Unfortunately, the vast majority of symbolic execution
tools~\cite{godefroid_sage:_2012,cadar_klee:_2008,DBLP:journals/ieeesp/AvgerinosBDGNRW18,DBLP:journals/tocs/ChipounovKC12,shoshitaishvili_sok:_2016,david_binsec/se:_2016}
is designed for safety verification and cannot directly be applied to
2-hypersafety properties.
In principle, 2-hypersafety properties can be reduced to standard
safety properties of a \emph{self-composed
  program}~\cite{barthe_secure_2004} but this reduction alone does not
scale~\cite{terauchi_secure_2005}.

\subsection{Symbolic Execution}
Symbolic Execution
(SE)~\cite{king_symbolic_1976,cadar_symbolic_2013,godefroid_sage:_2012}
consists in executing a program on \emph{symbolic inputs} instead of
concrete input values. Variables and expressions of the program are
represented as terms over symbolic inputs and the current path is
modeled by a \emph{path predicate} (a logical formula), which is the
conjunction of conditional expressions encountered along the
execution.

SE is built upon two main steps. %
(1) \emph{Path search}: at each conditional statement the symbolic
execution \emph{forks}, the expression of the condition is added to the
first branch and its negation to the second branch, then the symbolic
execution continues along satisfiable branches; %
(2) \emph{Constraint solving}: the path predicate can be solved with
an off-the-shelf \emph{automated constraint solver}, typically
SMT~\cite{DBLP:conf/woot/VanegueH12}, to generate a concrete input
exercising the path.

Combining these two steps, SE can explore many different program paths
and generate test inputs exercising these paths. It can also check
local assertions in order to \emph{find bugs} or perform
\emph{bounded-verification} (i.e., verification up to a certain
depth).
Dramatic progresses in program analysis and constraint solving over
the last two decades have made SE a tool of choice for intensive
testing~\cite{bounimova_billions_2013,cadar_symbolic_2013},
vulnerability
analysis~\cite{DBLP:journals/ieeesp/AvgerinosBDGNRW18,DBLP:conf/ndss/AvgerinosCHB11,DBLP:conf/uss/SchwartzAB11}
and other security-related
analysis~\cite{DBLP:conf/sp/YadegariJWD15,bardin_backward-bounded_2017}.

\subsection{Binary-Level Symbolic Execution}\label{sec:bl-se}

Low-level code operates on a set of registers and a single (large)
untyped memory.  During the execution, %
a call stack contains information about the active functions such as
their arguments and local variables. %
A special register \texttt{esp} (stack pointer) indicates the top
address of the call stack and local variables of a function can be
referenced as offsets from the initial
\texttt{esp}\footnote{\texttt{esp} is specific to x86, but this is
  generalizable, e.g.~\texttt{sp} for ARMv7.}.

\myparagraph{Binary-level symbolic execution.} {Binary-level code
  analysis} is notoriously more challenging than source code
analysis~\cite{DBLP:journals/toplas/BalakrishnanR10,DBLP:conf/fm/DjoudiBG16}. First,
evaluation and assignments of source code variables become memory load
and store operations, requiring to reason explicitely about the memory
in a very precise way. Second, the high level control flow structure
(e.g. \texttt{for} loops) is not preserved, \modif{as}{and} there are
dynamic jumps to handle (e.g.~instruction of the form \texttt{jmp
  eax}).

Fortunately, it turns out that SE is less difficult to adapt from
source code to binary code than other semantic analysis -- due to both
the efficiency of SMT solvers and concretization (i.e., simplifying a
formula by constraining some variables to be equal to their observed
runtime values).
Hence, strong binary-level SE tools do exist and have yielded several
highly promising case
studies~\cite{godefroid_sage:_2012,DBLP:journals/ieeesp/AvgerinosBDGNRW18,DBLP:journals/tocs/ChipounovKC12,shoshitaishvili_sok:_2016,david_binsec/se:_2016,bardin_backward-bounded_2017,DBLP:conf/dimva/SalwanBP18}.

\myparagraph{Logical notations.}  Binary-level SE relies on the theory
of bitvectors and arrays, \abv{}~\cite{barrett_smt-lib_2017}. %
Values (e.g.~registers, memory addresses, memory content) are modeled
with fixed-size
bitvectors~\cite{noauthor_fixedsizebitvectors_nodate}. %
We will use the type \(\bvtype{m}\), where \(m\) is a constant number,
to represent symbolic bitvector expressions. %
The memory is modeled with a logical
array~\cite{noauthor_arraysex_nodate,farinier_arrays_2018} of type
\(\memtype{}\) (assuming a $32$ bit architecture).  %
A logical array is a function \((Array~\mathcal{I}~\mathcal{V})\) that
maps each index \(i \in \mathcal{I}\) to a value
\(v \in \mathcal{V}\).
Operations over arrays are:
\begin{itemize}
\item
  \(select : (Array~\mathcal{I}~\mathcal{V}) \times \mathcal{I}
  \rightarrow \mathcal{V}\) takes an array \(a\) and an index \(i\)
  and returns the value \(v\) stored at index \(i\) in \(a\),
\item
  \(store: (Array~\mathcal{I}~\mathcal{V}) \times \mathcal{I} \times
  \mathcal{V} \rightarrow (Array\ \mathcal{I}\ \mathcal{V})\) takes an
  array \(a\), an index \(i\), and a value \(v\), and returns the
  array \(a\) in which \(i\) maps to \(v\).
\end{itemize}
These functions satisfy the following constraints for all
\({a \in(Array~\mathcal{I}~\mathcal{V})}\), \({i \in \mathcal{I}}\),
\({j \in \mathcal{I}}\), \({v \in \mathcal{V}}\):
\begin{itemize}
\item \(select~(store~a~i~v)~i = v\)
\item \(i \neq j \implies select~(store~a~i~v)~j = select~a~j\)
\end{itemize}
    
\section{Motivating Example}\label{sec:motivating}
Let us consider the toy program in \Cref{list:motivating}. The value
of the conditional at line 3 and the memory access at line 4 are
\emph{leaked}. We say that a \emph{leak is insecure} if it depends on
the secret input. Conversely, a \emph{leak is secure} if it does not
depend on the secret input. CT holds for a program if there is no
insecure leak.

\begin{center}
\begin{minipage}{.9\linewidth}
\begin{pseudocode}[numbers=left,caption={Toy program with one control-flow leak and one memory leak.},label={list:motivating}]
x := private_input();
y := public_input();
if y then return 0;  // leak y = 0
else return tab[x];  // leak x
\end{pseudocode}
\end{minipage}
\end{center}
\vspace{-1em}

Let us take two executions of this program with the same public input:
\((x,y)\) and \((x',y')\) where \(y = y'\). Intuitively, we can see
that the leakages produced at line 3, \(y = 0\) and \(y' = 0\), are
necessarily equal in both executions because \(y = y'\); hence this
leak does not depend on the secret input and is secure. On the
contrary, we can see that the leakages \(x\) and \(x'\) at line 4 can
differ in both executions (e.g.~take \(x := 0\) and \(x' := 1\));
hence this leak depends on the secret input and is insecure.

\emph{The goal of an automatic analysis is to prove that the leak at
line 3 is secure and to return  concrete input showing that the leak
at line 4 is insecure.} 

\myparagraph{Symbolic Execution \& Self-Composition (SC).} %
Symbolic execution can be adapted to the case of CT following the
self-composition principle. Instead of self-composing the program, we
rather self-compose the formula with a renamed version of itself plus
a precondition stating that the low inputs \modif{from each ``part''
  of the formula}{} are equal.
Basically, this amounts to model \emph{two different executions
  following the same path and sharing the same low input} in a single
formula.
At each conditional statement, \emph{exploration queries} are sent to
the solver to determine satisfiable branches -- followed by both
executions (similar to standard SE exploration).
Moreover, additional \emph{insecurity queries} specific to CT are sent
before each conditional statement and memory access to determine if
they depend on the secret or not -- if \modif{one of these extra
  queries}{an insecurity query} is satisfiable then a CT violation is
found.

As an illustration, let us consider the program in
\Cref{list:motivating}. First, we assign symbolic values to
\codeinline{x} and \codeinline{y} and use symbolic execution to
generate a formula of the program until the first conditional (line
3), resulting in the formula:
\(x = \beta ~\wedge~ y = \lambda ~\wedge~ c = (\lambda >
0)\). Second, self-composition is applied on the formula with
precondition \(\lambda = \lambda'\) to constraint the low inputs to be
equal in both executions. Finally, a postcondition \(c \neq c'\) asks
whether the value of the conditional can differ, resulting in the
following insecurity query:
\begin{equation*}
  \lambda = \lambda' ~\wedge~
  \left(\begin{aligned}
      x = \beta ~\wedge~ y = \lambda ~\wedge~ c = (\lambda > 0) ~\wedge~ \\
      x' = \beta' ~\wedge~ y' = \lambda' ~\wedge~ c' = (\lambda' > 0) \\
    \end{aligned}\right)
  ~\wedge~ c \neq c'
\end{equation*}

This formula is sent to a \textsc{smt}-solver. If the solver returns
\textsc{unsat}, meaning that the query is not satisfiable, then the
conditional does not differ in both executions and thus is
secure. Otherwise, it means that the outcome of the conditional
depends on the secret and the solver will return a counterexample
satisfying the insecurity query. %
Here, \texttt{z3} answers that the query is \textsc{unsat} and we can
conclude that the leak is secure. %
With the same method, the analysis finds that the leak at line 4 is
insecure, and returns two inputs (0,0) and (1,0), respectively leaking
0 and 1, as a counterexample showing the violation.

\myparagraph{Limits.} Basic self-composition suffer from two
weaknesses:
\begin{itemize}
\item It generates lots of insecurity queries -- at each conditional
  statement and memory access. Yet, in the previous example it is
  clear that the conditional does not depend on secrets and could be
  spared with better information flow tracking. %
\item The whole original formula is duplicated so the size of the
  self-composed formula is twice the size of the original formula.
  Yet, because the parts of the program that only depend on public
  inputs are equal in both executions, the self-composed formula
  contains redundancies that are not exploited.
\end{itemize}

\myparagraph{Relational Symbolic Execution (RelSE).}  %
We can improve SC by maximizing \emph{sharing} between the pairs of
executions~\cite{palikareva_shadow_2016,farina_relational_2017}.
As previously, RelSE models two executions of a program \(P\) in the
same symbolic execution instance, let us call them \(p\) and \(p'\).
But in RelSE, variables of \(P\) are mapped to \emph{relational
  expressions} %
which are either \emph{pairs} of expressions or \emph{simple}
expressions. The variables that \emph{must be equal} in \(p\) and
\(p'\) (typically, the low inputs) are represented as \emph{simple}
expressions while those that \emph{may be different} are represented
as \emph{pairs} of expressions. First, this enables to share redundant
parts of \(p\) and \(p'\), reducing the size of the self-composed
formula. Second, variables mapping to simple expressions cannot depend
on the secret input, allowing to easily spare some insecurity queries.

As an example, let us perform RelSE of the toy program in
\Cref{list:motivating}. Variable \codeinline{x} is assigned a pair of
expressions ${\pair{\beta}{\beta'}}$ and \codeinline{y} is assigned a
simple expression %
$\simple{\lambda}$. Note that the precondition that public variables
are equal is now implicit since we use the same symbolic variable in
both executions. At line 3, the conditional expression is evaluated to
$c = \simple{\lambda > 0}$ and we need to check that the leakage of
$c$ is secure. Since $c$ maps to a simple expression, we know by
definition that it does not depend on the secret, hence we can spare
the insecurity query.

\emph{RelSE %
  maximizes sharing between both executions and tracks
  secret-dependencies enabling to spare insecurity queries and reduce
  the size of the formula. }

\myparagraph{Challenge of binary-level analysis.} %
Recall that, in binary-level SE, the memory is represented as a
special variable of type \(\memtype\). We cannot directly store
relational expressions in it, so in order to store high inputs at the
beginning of the execution, we have to duplicate it. In other words
the \emph{memory is always duplicated}. %
Consequently, every \(select\) operation will evaluate to a duplicated
expression, preventing to spare queries in many situations.

As an illustration, consider the compiled version of the previous
program, given in~\Cref{lst:limitation_std_relse}. The steps of  
RelSE on this program are given in \cref{fig:relse_motivating2}. %
Note that when the secret input is stored in  memory at line
\((1)\), the array representing the memory is duplicated. This
propagates to the load expression in \texttt{eax} at line \((3)\) and
to the conditional expression at line \((4)\). %
Intuitively, at line \((4)\), \texttt{eax} should be equal to the
simple expression \(\simple{\lambda}\) in which case we could spare
the insecurity query like in the previous example. %
However, because dependencies cannot be tracked in the array
representing the memory, \texttt{eax} evaluates to a pair of
\(select\) expression and we have to send the insecurity query to the
solver.

\begin{center}
\begin{minipage}{.95\linewidth}
  \begin{pseudocode}[numbers=left, caption={Compiled version of the
      conditional in \Cref{list:motivating}, where
      \(\mathtt{x} := \pair{\beta}{\beta'}\) (resp.
      \(\mathtt{x} := \simple{\lambda}\)) denotes that \(\mathtt{x}\)
      is assigned a high (resp. low)
      input.},label={lst:limitation_std_relse}]
@[ebp-8] := $\pair{\beta}{\beta'}$;  // store high input
@[ebp-4] := $\simple{\lambda}$;      // store low input
eax := @[ebp-4];   // assign $\simple{\lambda}$ to eax
ite eax ? $\text{l}_1$ : $\text{l}_2$;     // leak $\simple{\lambda}$
[...]
\end{pseudocode}
\end{minipage}
\end{center}
\vspace{-1em}

\begin{figure}[!htbp]
\centering
\begin{tabular}{@{}rl@{~}r@{~}l@{}}
\((init)\) & \multicolumn{3}{l@{}}{\(\mathtt{mem} := \simple{\mu_0}\)
             and \(\mathtt{ebp} := \simple{ebp}\)}\\
   \((1)\) & \(\mathtt{mem} := \pair{\mu_1}{\mu_1'}\)
             & where & \(\mu_1 \mydef store(\mu_0,ebp-8,\beta)\)\\
           & &   and & \(\mu_1' \mydef store(\mu_0,ebp-8,\beta')\)\\
   \((2)\) & \(\mathtt{mem} := \pair{\mu_2}{\mu_2'}\)
           &  where & \(\mu_2 \mydef store(\mu_1,ebp-4,\lambda)\)\\
           & &  and & \(\mu_2' \mydef store(\mu_1',ebp-4,\lambda)\)\\
   \((3)\) & \(\mathtt{eax} := \pair{\alpha}{\alpha'}\)
           &  where & \(\alpha \mydef select(\mu_2,ebp-4)\)\\
           & &  and & \(\alpha' \mydef select(\mu_2',ebp-4)\)\\
   \((4)\) & \multicolumn{3}{l@{}}{\(leak~\pair{\alpha \neq 0}{\alpha' \neq 0}\)}
\end{tabular}
\caption{RelSE of program in \Cref{lst:limitation_std_relse} where
  \(\mathtt{mem}\) is the memory variable, \(\mathtt{ebp}\) and
  \(\mathtt{eax}\) are registers,
  \(\mu_0, \mu_1, \mu_1', \mu_2, \mu_2'\) are symbolic array
  variables, and \(ebp, \beta, \beta', \lambda, \alpha, \alpha'\) are
  symbolic bitvector variables}\label{fig:relse_motivating2}
\end{figure}

\myparagraph{Practical impact.}  We report in~\cref{tab:motivating} the performances of
CT-analysis on an implementation of  elliptic curve
Curve25519-donna~\cite{bernstein_curve25519:_2006}. %
\textit{Both SC and RelSE fail to  prove the program secure in less than 1h.}
\textit{RelSE} does reduce the number of queries w.r.t.~\textit{SC},
but it is not sufficient.

\myparagraph{Our solution.} To mitigate this issue, we propose
dedicated simplifications for binary-level relational symbolic
execution that allow a precise tracking of secret-dependencies \emph{in
  the memory} (details in~\cref{sec:optims}). In the particular
example of~\cref{tab:motivating}, our prototype \brelse{} \emph{does
  prove that the code is secure} in less than 20 minutes. Our
simplifications simplify \emph{all the queries}, resulting in a
\(\times 2000\) speedup compared to standard RelSE and SC in terms of
number of instructions treated per second.

\section{Concrete Semantics \& Fault
  Model}\label{sec:concrete_semantics}

\myparagraph{Dynamic Bitvectors Automatas} %
(DBA)~\cite{gopalakrishnan_bincoa_2011} is used by
\binsec{}\cite{david_binsec/se:_2016} as an Intermediate
Representation to model low-level programs and perform its
analysis. The syntax of DBA programs is presented in
\cref{table:dba_lang}.  %
\begin{figure}[htbp]
  \centering
  \fbox{\begin{minipage}{.98\linewidth}
      \[\begin{array}{@{}r@{~}r@{~}l@{~}r@{~}r@{~}l@{}}
    prog  & ::= & \varepsilon ~|~ stmt~prog &
    lval  & ::= &\texttt{v} ~|~ \store~expr \\
    stmt  & ::= &<\texttt{l},inst>  &
    expr  & ::= & \texttt{v} ~|~ \texttt{bv} ~|~ \load~expr \\
    inst  & ::= & lval~\texttt{:=}~expr &
          &   | & \blackdiamond_{u}~expr \\
          &   | & \texttt{ite}~expr\texttt{?}~\texttt{l}_1 \texttt{:} \texttt{l}_2 &
          &   | & expr~\blackdiamond_{b}~expr\\
          &   | & \texttt{goto}~expr ~|~ \texttt{goto l} &
    \blackdiamond_{u} & ::= & \neg ~|~ - \\ 
          &   | & \texttt{halt} &
    \blackdiamond_{b} & ::= & +\: |\: \times\: |\: \leq\: |\: \dots \\
  \end{array}\]\end{minipage}}
  \caption{The syntax of DBA programs, where \texttt{l},
    $\texttt{l}_1$, $\texttt{l}_2$ are  program locations, \texttt{v}
    is a variable and \texttt{bv} is a value.%
  }
  \label{table:dba_lang}
\end{figure}

Let \(\instrset\) denote the set of instructions and \(\locset\) the
set of \modif{}{(program)} locations. %
A program \(\prog{} : \locset \rightarrow \instrset\) is a map from
locations to instructions. %
Values \texttt{bv} and variables \texttt{v} range over the set of
fixed-size bitvectors \(\bvset{n} := {\{0,1\}}^n\) (set of \(n\)-bit
words). %
A concrete configuration is a tuple
\(\cconf{\locvar}{\cregmap}{\cmem}\) where: %
\begin{itemize}
\item \(\locvar \in \locset\) is the current location, and
  \(\locmap{l}\) returns the current instruction,
\item \(\cregmap : \varset{} \to \bvset{n} \) is a register map that maps
  variables to their bitvector value,
\item \(\cmem : \bvset{32} \to \bvset{8}\) is the memory, mapping 32-bit
  addresses to bytes and is accessed by the operator \(@\) (read in an
  expression and write in a left value).
\end{itemize}
The initial configuration is given by
\(\cconfvar_0 \mydef \cconf{\locvar_0}{\cregmap_0}{\cmem_0}\) with
\(\locvar_0\) the address of the entrypoint of the program, \(r_0\) an
arbitrary register map, and \(m_0\) an arbitrary memory.

\myparagraph{Leakage model.}
The behavior of the program is modeled with an instrumented
operational semantics taken from~\cite{barthe_secure_2018} in which
each transition is labeled with an explicit notion of leakage.
A transition from a configuration \(c\) to a configuration \(c'\)
produces a leakage \(\leakvar\), denoted \(c \cleval{\leakvar} c'\). %
Analogously, the evaluation of an expression \(e\) in a configuration
\(\cconf{\locvar}{\cregmap}{\cmem}\), denoted
\(\ceconf{\cregmap}{\cmem}{e} \ceeval{\leakvar} bv\), produces a
leakage \(\leakvar\). %
The leakage of a multistep execution is the concatenation of leakages
produced by individual steps. We use \(\cleval{\leakvar}^k\) with
\(k\) a natural number to denote \(k\) steps in the concrete
semantics.

An excerpt of the concrete semantics is given in
~\cref{fig:dba_semantics} where leakage by memory accesses occur
during execution of load and store instructions and control flow
leakages during execution of dynamic jumps and conditionals. The full
set of rules is given in \cref{app:concrete_evaluation}.

\begin{figure}[htbp]
  \centering

  \fbox{\begin{minipage}{.98\linewidth}\begin{mathpar}

    \inferrule*[left={load}]
    {
      \ceconf{\cregmap}{\cmem}{e} \ceeval{\leakvar} \texttt{bv} \\
    }
    {
      \ceconf{\cregmap}{\cmem}{\load~e} %
      \ceeval{\leakvar \cdot [\texttt{bv}] } \cmem~\texttt{bv} 
    }\and

    \vspace{-1em}

    \inferrule*[lab={d\_jump}]{
      \locmap{l} = \texttt{goto}\ e \\
      \ceconf{\cregmap}{\cmem}{e} \ceeval{\leakvar} \texttt{bv} \\
      \locvar' \mydef to\_loc(\texttt{bv})
    }
    {
      \cconf{\locvar}{\cregmap}{\cmem} %
      \cleval{\leakvar \cdot [\locvar']} %
      \cconf{\locvar'}{\cregmap}{\cmem}
    }\and

    \vspace{-2em}

    \inferrule*[lab={t-ite}]{
      \locmap{l} = \texttt{ite}\ e\ \texttt{?}\ \locvar_1 \texttt{:}\ \locvar_2\\  
      \ceconf{\cregmap}{\cmem}{e} \ceeval{\leakvar} \texttt{bv} \\
      \texttt{bv} \neq 0
    }
    {
      \cconf{\locvar}{\cregmap}{\cmem} %
      \cleval{\leakvar \cdot [\locvar{}_1]} %
      \cconf{\locvar{}_1}{\cregmap}{\cmem}
    }\and

    \inferrule*[left={store}]{
      \locmap{l} = \store{} e := e'\\
      \ceconf{\cregmap}{\cmem}{e} \ceeval{\leakvar} \texttt{bv} \\
      \ceconf{\cregmap}{\cmem}{e'} \ceeval{\leakvar'} \texttt{bv}' \\
    }
    { %
      \cconf{\locvar}{\cregmap}{\cmem} %
      \cleval{\leakvar' \cdot \leakvar \cdot [\texttt{bv}]} %
      \cconf{\locvar+1}{\cregmap}{\cmem[\texttt{bv} \mapsto \texttt{bv'}]} %
    }
  \end{mathpar}\end{minipage}}
\caption{Concrete evaluation of DBA instructions and expressions
  (excerpt), where \(\cdot\) is the concatenation of leakages and
  \(to\_loc : \bvset{32} \to \locset\) converts a bitvector to a
  location.}
  \label{fig:dba_semantics}
\end{figure}

\myparagraph{Secure program.} %
Let \(\highvarset \subseteq \varset\) be the set of high (secret)
variables and \(\lowvarset := \varset \setminus \highvarset\) be the
set of low (public) variables. Analogously, we define
\(\highmemset \subseteq \bvset{32}\) (resp.
\(\lowmemset := \bvset{32} \setminus \highmemset\)) as the addresses
containing high (resp.\ low) input in the initial memory. %

The \emph{low-equivalence relation} over concrete configurations
\(\cconfvar\) and \(\cconfvar'\), denoted
\(\cconfvar \loweq \cconfvar'\), is defined as the equality of low
variables and low parts of the memory. %
Formally, two configurations %
\(\cconfvar \mydef \cconf{\locvar}{\cregmap}{\cmem}\) and %
\(\cconfvar' \mydef \cconf{\locvar'}{\cregmap'}{\cmem'}\) are
low-equivalent iff,
\begin{gather*}
  \forall v \in \lowvarset.\ \cregmap\ v = \cregmap'\ v\ \\
  \forall a \in \lowmemset.\ \cmem\ a = \cmem'\ a
\end{gather*}

\begin{definition}[Constant-time up to \(k\)]
  A program is constant-time (\ct{}) up to \(k\) iff for all
  low-equivalent initial configurations \(\cconfvar_0\) and
  \(\cconfvar'_0\), that
  evaluate in \(k\) steps to \(\cconfvar_k\) and \(\cconfvar'_k\)
  producing leakages \(\leakvar\) and \(\leakvar'\),
  \begin{multline*}
    \cconfvar_0 \loweq \cconfvar_0'\ %
    ~\wedge~ \cconfvar_0 \cleval{\leakvar}^k \cconfvar_k %
    ~\wedge~ \cconfvar'_0 \cleval{\leakvar'}^k \cconfvar'_k%
    \implies \leakvar = \leakvar'
  \end{multline*}
\end{definition}

\section{Binary-level Relational Symbolic
  Execution}\label{sec:std_relse}

Our symbolic execution relies on the
\abv{}~\cite{barrett_smt-lib_2017} first-order
logic. %
We let \(\beta\), \(\beta'\), \(\lambda\), \(\varphi\), \(i\), \(j\)
range over the set of formulas $\formulaset$ in the \abv{} logic.  A
\emph{relational} formula \(\rel{\varphi}\) is either a \abv{} formula
\(\simple{\varphi}\) or a pair \(\pair{\varphi_l}{\varphi_r}\) of two
\abv{} formulas.  We denote \(\lproj{\rel{\varphi}}\)
(resp.\ \(\rproj{\rel{\varphi}}\)), the projection on the left
(resp.\ right) value of \(\rel{\varphi}\). If
\(\rel{\varphi} = \simple{\varphi}\), then \(\lproj{\rel{\varphi}}\)
and \(\rproj{\rel{\varphi}}\) are both defined as \(\varphi\).  We let
\(\rlift{\formulaset}\) be the set of relational formulas and
\(\rlift{\bvtype{n}}\) be the set of relational symbolic bitvectors of
size $n$.

\myparagraph{Symbolic
  configuration.}\label{sec:symbolic-configuration1} %
Since we restrict our analysis to pairs of traces following the same
path -- which is sufficient for constant-time -- the symbolic
configuration only considers a single program location \(l \in Loc\)
at any point of the execution.
\smallskip

\noindent A \emph{symbolic configuration} is of the form
\(\iconfold{l}{\regmap}{\smem}{\pc{}}\) where:
\begin{itemize}
\item \(l \in Loc\) is the current program point,
\item \(\regmap{} : \varset{} \rightarrow \rlift{\formulaset}\) is a
  symbolic register map, mapping variables from a set \(\varset{}\) to
  their symbolic representation as a relational formula in
  \(\rlift{\formulaset}\),
\item \(\smem : \memtype \times \memtype\) is the symbolic memory -- a
  pair of arrays of values in \(\bvtype{8}\) indexed by addresses in
  \(\bvtype{32}\),
\item \(\pc{} \in \formulaset\) is the path predicate -- a conjunction
  of conditional statements and assignments encountered along a path.
\end{itemize}

\myparagraph{Symbolic evaluation} of instructions, denoted
\(\sconfvar \ieval{} \sconfvar'\) where $\sconfvar$ and $\sconfvar'$
are symbolic configurations, is given in
\Cref{fig:eval_instr_sha_excerpt} -- the complete set of rules is
given in \Cref{app:symbolic_evaluation}. %
The evaluation of an expression in a state
\(\stateold{\regmap}{\smem}\) to a relational formula
\(\rel{\varphi}\), is given by
\(\ceconf{\regmap}{\smem}{expr} \eeval{} \rel{\varphi}\). %
A model \(M\) assigns concrete values to symbolic variables. The
satisfiability of a formula \(\pi\) with a model \(M\) is denoted
$M \sat{\pi}$.  Whenever the model is not needed for our purposes, we
leave it implicit and simply write $\sat{\pi}$ for satisfiability.  In
the implementation, we use an \textsc{smt}-solver to determine
satisfiability of a formula.

For the security evaluation of the symbolic leakage we define a
function $\secleak$ which verifies that a relational formula in the
symbolic leakage %
does not differ in its right and left components, i.e.\ that the
symbolic leakage is secure:
\begin{alignat*}{2}
  \secleak(\rel{\varphi} ) &
  &=
  \begin{cases}
    true & {\sf if~} \rel{\varphi} = \simple{\varphi}  \\
    true & {\sf if~} \rel{\varphi} = \pair{\varphi_l}{\varphi_r}
    \wedge \unsat{\big( \pi \wedge (\varphi_l \neq \varphi_r}\big)  \\
    false & {\sf otherwise}
  \end{cases} 
\end{alignat*}

Notice that a simple expression \(\simple{\varphi}\) does not depend
on secrets and can be leaked securely. Thus it \emph{spares an
  insecurity query} to the solver. %
On the other hand, a duplicated expression
\(\pair{\varphi_l}{\varphi_r}\) \emph{may} depend on secrets. Hence
\emph{an insecurity query must be sent to the solver} to ensure that
the leak is secure.

Detailed explanations of the symbolic evaluation rules follow:

\rulename{load} is the evaluation of a load expression. The rule
returns a pair of logical \(select\) formulas from the pair of
symbolic memories \(\smem\) (the box in the hypotheses should be
ignored for now, it will be explained in~\cref{sec:optims}). Note that
the returned expression is \emph{always duplicated} as the \(select\)
must be performed in the left and right memories independently.

\rulename{d\_jump} is the evaluation of a dynamic jump. The rule finds
a concrete value $l'$  for the jump target, and updates the path predicate
and the location. Note that this rule is nondeterministic as \(l'\)
can be any concrete value satisfying the constraint. In practice, we
call the solver to enumerate jump targets up to a given bound and
continue the execution along the valid targets (which jump to an
executable section).

\rulename{ite-true} is the evaluation of a conditional jump when the
expression evaluates to true (the false case is analogous). The rule
updates the path predicate and the next location accordingly.

\rulename{store} is the evaluation of a store instruction. The rule
evaluates the index and value of the store and updates the symbolic
memories and the path predicate with a logical \(store\)
operation. 

\begin{figure}[htbp]
  \fbox{\begin{minipage}{.98\linewidth}\begin{mathpar}
    \vspace{-1em}

    \inferrule*[lab={load}]{
      \econfold{\regmap}{\smem}{e} \eeval{} \rel{\phi}\\
     \boxed{\rel{\varphi} \mydef{} \pair{select(\lproj{\smem}, \lproj{\rel{\phi}}) }{
                                    select(\rproj{\smem}, \rproj{\rel{\phi}})}} \\
                                    \secleak(\rel{\phi})
    }{
      \econfold{\regmap}{\smem}{\load{} e} \eeval{} \rel{\varphi}
    }\and
    
    \vspace{-2em}

    \inferrule*[lab={d\_jump}]{
      \locmap{l} = \texttt{goto}\ e\\
      \econfold{\regmap}{\smem}{e}  \eeval{} \rel{\varphi}  \\
      \pi{}' \mydef{} \pi{} \wedge{} %
      (\lproj{\rel{\varphi{}}} = \rproj{\rel{\varphi{}}})\\
      M\sat{\pi{}'}  \\ l' \mydef M(\lproj{\rel{\varphi{}}}) \\
      \secleak(\rel{\varphi}) }{ \iconfold{l}{\regmap}{\smem}{\pc}
      \ieval{} \iconfold{l'}{\regmap}{\smem}{%
        \pi{}'} }\and

    \vspace{-2em}

    \inferrule*[lab={ite-true}]{
      \locmap{l} = \texttt{ite}\ e\ \texttt{?}\ l_{true} \texttt{:}\ l_{false}\\
      l' \mydef  l_{true} \\
      \econfold{\regmap}{\smem}{e } \eeval{} \rel{\varphi} \\
      \pc{}' \mydef{} \pc{} \wedge{} %
      (true= \lproj{\rel{\varphi{}}} = \rproj{\rel{\varphi{}}})\\
         \sat{\pc'} \\
                \secleak(\rel{\varphi})
   }{
      \iconfold{l}{\regmap}{\smem}{\pc{}} \ieval{}
      \iconfold{l'}{\regmap}{\smem}{\pc{}'}
    }\and    
    
    \vspace{-2em}

    \inferrule*[lab={store}]{
      \locmap{l} = \store{} e := e' \\
      l' = l+1\\
      \econfold{\regmap}{\smem}{e}  \eeval{} \rel{\varphi} \\
      \econfold{\regmap}{\smem}{e'} \eeval{} \rel{\phi} \\
    \smem' \mydef{}  \pair{ store(\lproj{\smem},\lproj{\rel{\varphi}},\lproj{\rel{\phi}})}{store(\rproj{\smem},\rproj{\rel{\varphi}},\rproj{\rel{\phi}})} \\
       \pc'\! \mydef{} \pc{} \wedge \lproj{\smem}'\!=\!store(\lproj{\smem},\lproj{\rel{\varphi}},\lproj{\rel{\phi}})
                          \wedge \rproj{\smem}'\!=\!store(\rproj{\smem},\rproj{\rel{\varphi}},\rproj{\rel{\phi}})\\
                                 \secleak(\rel{\varphi})
    }{
      \iconfold{l}{\regmap}{\smem}{\pc{}} \ieval{}
      \iconfold{l'}{\regmap}{\smem'}{\pc{}'}
    }
  \end{mathpar}\end{minipage}}
  \caption{Symbolic evaluation of DBA instructions and expressions (excerpt). }
  \label{fig:eval_instr_sha_excerpt}
\end{figure}

\myparagraph{Specification of high and low input.} %
By default, the content of the memory and registers is low so we have
to specify addresses that initially contain secret inputs.  The
addresses of high variables can be specified as offsets from the
initial stack pointer \texttt{esp}.  A pair
\(\pair{\beta}{\beta'} \in \rlift{\bvtype{8}}\) of fresh symbolic
variables is stored at each given offset \(h\) and modifies the
symbolic configuration just as a store instruction %
\(\texttt{@[esp + \(h\)] :=}~{\pair{\beta}{\beta'}}\)
would. Similarly, offsets containing low inputs can be set to simple
symbolic expressions \(\simple{\lambda}\) -- although it is not
necessary since the initial memory is equal in both executions.

\myparagraph{Bug-Finding.} A vulnerability is found when the function
\(\secleak(\varphi)\) evaluates to \emph{false}. In this case, the
insecurity query is satisfiable and there exists a model \(M\) such
that %
\(M \sat{\pi \wedge (\lproj{\rel{\varphi}} \neq
  \rproj{\rel{\varphi}})}\).  The model $M$ assigns concrete values to
variables that satisfy the insecurity query. Therefore it can be
returned as a concrete counterexample which triggers the
vulnerability, along with the current location \(\locvar\) of the
vulnerability.

\subsection{Optimizations for binary-level SE}\label{sec:optims}

Relational symbolic execution does not scale in the context of
binary-level analysis (see \textit{RelSE} in
\Cref{tab:scale_total_summary}). In order to achieve better
scalability, we enrich our analysis with an optimization, called
\emph{on-the-fly-read-over-write} (\textit{FlyRow} in
\cref{tab:scale_optims}), based on
\emph{read-over-write}~\cite{farinier_arrays_2018}.  This optimization
simplifies expressions and resolves load operations ahead of the
solver, often avoiding to resort to the duplicated memory and allowing
to spare insecurity queries. %
We also enrich our analysis with two further optimizations, called
\emph{untainting} and \emph{fault-packing} (\textit{Unt} and
\textit{fp} in \cref{tab:scale_optims}), specifically targeting SE for
information flow analysis.

\subsubsection{On-the-Fly Read-Over-Write}\label{sec:row}

Solver calls are the main bottleneck of symbolic execution, and
reasoning about \(store\) and \(select\) operations in arrays is
particularly challenging~\cite{farinier_arrays_2018}. Read-over-write
(Row)~\cite{farinier_arrays_2018} is a simplification for the theory
of arrays that efficiently resolves \(select\) operations. This
simplification is particularly efficient in the context of
binary-level analysis because the memory is represented as an array
and formulas contain many \(store\) and \(select\) operations.

The standard read-over-write optimization~\cite{farinier_arrays_2018}
has been implemented as a solver-pre-processing, simplifying a formula
before sending it to the solver. While it has proven to be very
efficient to simplify individual formulas of a single
execution~\cite{farinier_arrays_2018}, we show in \cref{sec:comp-se}
that it does not scale in the context of relational reasoning, where
formulas model two executions and a lot of queries are sent to the
solver. %

Thereby, we introduce \emph{on-the-fly read-over-write}
(\textit{FlyRow}) to track secret-dependencies in the memory and spare
insecurity queries in the context of information flow analysis. By
keeping track of \emph{relational \(store\)} expressions along the
symbolic execution, it can resolve \(select\) operations -- often
avoiding to resort to the duplicated memory -- and drastically reduces
the number of queries sent to the
solver, %
improving the performances of the analysis.

\myparagraph{Lookup.} %
The symbolic memory can be seen as the history of the successive
\(store\) operations beginning with the initial memory \(\mu_0\). %
Therefore, a memory \(select\) can be resolved by going back up the
history and comparing the index to load, with indexes previously
stored. %
Our optimization consists in replacing selection in the memory
(\Cref{fig:eval_instr_sha_excerpt}, \rulename{load} rule, boxed
hypothesis) by a new function %
\(\lookup : (\memtype \times \memtype) \times \bvtype{32} \to
\rlift{\bvtype{8}}\) which takes a relational memory and an index, and
returns the relational value stored at that index. The lookup function
can be lifted to relational indexes but for simplicity we only define
it for simple indexes and assume that relational store operations
happen to the same index in both sides -- note that for constant-time
analysis, this hypothesis holds. The function returns a relational
bitvector formula, and is defined as follows:
\begin{alignat*}{2}
  &\lookup(\smem_0, i) = ~~\pair{select(\lproj{{\smem_0}}, i)}{select(\rproj{{\smem_0}}, i)}\\
  &\lookup(\smem_n, i) = \\
 &
  \begin{cases}
    \simple{\varphi_l}  & {\sf if~} \compare(i,j)  \wedge \compare(\varphi_l,\varphi_r) \\
    \pair{\varphi_l}{\varphi_r}  & {\sf if~} \compare(i,j) \wedge \neg\compare(\varphi_l,\varphi_r)\\
    \lookup(\smem_{n-1}, i)& {\sf if~} \neg\compare(i,j)\\ 
    \rel{\phi}& {\sf if~} \compare(i,j) = \bot
  \end{cases} \\  \text{~where~}\\
   & \smem_n \mydef \pair{store(\lproj{{\smem_{n-1}}},j,\varphi_l)}{store(\rproj{{\smem_{n-1}}},j,\varphi_r)} \\
   & \rel{\phi} \mydef \pair{select(\lproj{{\smem_n}}, i)}{select(\rproj{{\smem_n}}, i)}
  \end{alignat*}
  where \(\compare(i,j)\) is a comparison function relying on
  \emph{syntactic term equality}, which returns true (resp.\ false)
  only if \(i\) and \(j\) are equal (resp.\ different) in any
  interpretation. If the terms are not comparable, it is undefined,
  denoted \(\bot\).

\begin{example}[Lookup]\label{ex:lookup}
  Let us consider the memory: \tikzstyle{box1}=[draw, fill=gray!10, text centered, minimum width = 2em, minimum height=1.8em,scale=0.8]
\tikzstyle{box2}=[draw, text centered, minimum width = 1.5em, minimum height=1.8em,scale=0.8]
\begin{center}
  \begin{tikzpicture}[node distance=0cm,outer sep = 0pt]
    \node (I) [] {};
    \node (L) [left=0pt of I] {\(\smem ~~=\)};
    \node (S)  [right=0pt of L] {};
    \node (C)  [box1,right = 0pt of S] {\(ebp-4\)};
    \node (C1) [box2,anchor=north west] at (C.north east) {$\simple{\lambda}$};
    \node (B)  [box1,right = 1.8em of C1] {\(ebp-8\)};
    \node (B1) [box2,anchor=north west] at (B.north east) {$\pair{\beta}{\beta'}$};
    \node (D)  [box1,right = 1.8em of B1] {\(esp\)};
    \node (D1) [box2,anchor=north west] at (D.north east) {$\simple{ebp}$};
    \node (A)  [right = 1.8em of D1] {\([~]\)};
    \draw[->] (C1) -- (B);
    \draw[->] (B1) -- (D);
    \draw[->] (D1) -- (A);
  \end{tikzpicture}
\end{center}

  \begin{itemize}
  \item A call to \(\lookup(\rel{\mu}, ebp - 4)\) returns \(\lambda\).
  \item A call to \(\lookup(\rel{\mu}, ebp - 8)\) \modif{starts by
      comparing}{first compares} the indexes \([ebp-4]\) and
    \([ebp-8]\). Because it can determine that these indexes are
    \emph{syntactically distinct}, the function moves to the second
    element, determines the syntactic equality of indexes and returns
    \(\pair{\beta}{\beta'}\).
  \item A call to \(\lookup(\rel{\mu}, esp)\) tries to compare the
    indexes \([ebp-4]\) and \([esp]\). Without further information,
    \modif{it cannot compare \(ebp\) and \(esp\), thus the equality or
      disequality of these indexes cannot be determined. In this
      case,}{the equality or disequality of \(ebp\) and \(esp\)
      cannot be determined, therefore} the lookup is aborted and the
    \(select\) operation cannot be simplified.
  \end{itemize}
\end{example}

\myparagraph{Term rewriting.} %
To improve the conclusiveness of this syntactic comparison, the terms
are assumed to be in \emph{normalized} form \(\beta + o\) where
\(\beta\) is a base (i.e.\ an expression on symbolic variables) and
\(o\) is a constant offset.  In order to apply \textit{FlyRow}, we
normalize all the formulas created during the symbolic execution
(details of our normalization function are ommited for space reasons).
The comparison of two terms \(\beta + o\) and \(\beta' + o'\) in
normalized form can be efficiently computed as follows: if the bases
\(\beta\) and \(\beta'\) are syntactically equal, then return
\(o = o'\), otherwise the terms are not comparable.

In order to increase the conclusiveness of \textit{FlyRow}, we also
need variable inlining. However, inlining all variables is not a
viable option as it would lead to an exponential term size growth. %
Instead, we define a \emph{canonical form} \(v + o\) where \(v\) is a
bitvector variable, and \(o\) is a constant bitvector offset, and we
only inline formulas that are in canonical form. It enables rewriting
of most of the memory accesses on the stack which are of the form
\codeinline{ebp + bv} while avoiding term-size explosion.

\subsubsection{Untainting}
\label{sec:untainting}
After the evaluation of a rule with the predicate $\secleak$ for a
duplicated expression \(\pair{\varphi_l}{\varphi_r} \), we know that
the equality \(\varphi_l = \varphi_r\) holds in the current
configuration. From this equality, we can deduce useful information
about variables that must be equal in both executions. We can then
propagate this information to the register map and memory in order to
spare subsequent insecurity queries concerning these variables. %
For instance, consider the leak of the duplicated expression
\(\pair{v_l + 1}{v_r + 1}\), where \(v_l\) and \(v_r\) are symbolic
variables. If the leak is secure, we can deduce that \(v_l = v_r\) and
replace all occurrences of \(v_r\) by \(v_l\) in the rest of the
symbolic execution.

We define a function \(\untaint(\regmap,\smem, \rel{\varphi})\) that
takes a register map \(\regmap\), a memory \(\smem\), and a duplicated
expression \(\rel{\varphi}\); it applies the rules defined
in~\cref{fig:untainting_rules} \modif{to}{which} deduce variable
equalities from \(\rel{\varphi}\), propagate them in \(\regmap\) and
\(\smem\), and return a pair of updated register map and memory
\((\regmap', \smem')\). %
Intuitively, if the equality of variables \(v_l\) and \(v_r\) can be
deduced from \(\secleak(\rel{\varphi})\), the \(untaint\) function
replaces occurences of \(v_r\) by \(v_l\) in the memory and the
register map. As a result, a duplicated expression \(\pair{v_l}{v_r}\)
would be replaced by the simple expression \(\simple{v_l}\) in the
rest of the execution\footnote{We implement untainting with a cache of
  "untainted variables" that are substituted in the program copy when
  relational expressions are built.}.%

\begin{figure}
  \centering
  \begin{gather*}
    untaint(\regmap,\smem,\pair{v_l}{v_r}) = (\regmap[v_r \backslash v_l],\smem[v_r \backslash v_l])\\
    \begin{rcases}
    untaint(\regmap,\smem,\pair{\neg t_l}{\neg t_r})\\
    untaint(\regmap,\smem,\pair{-t_l}{-t_r})\\
    untaint(\regmap,\smem,\pair{t_l + k}{t_r + k})\\
    untaint(\regmap,\smem,\pair{t_l - k}{t_r - k})\\
    untaint(\regmap,\smem,\pair{t_l :: k}{t_r :: k})\\
  \end{rcases} = untaint(\regmap,\smem,\pair{t_l}{t_r})
  \end{gather*}
  \caption{Untainting    rules where \(v_l, v_r\) are bitvector variables
   and \(t_l, t_r, k\) are arbitrary bitvector terms, and
   $f[v_r \backslash v_l]$ 
   indicates that the variable \(v_r\) is substituted with 
   \(v_l\) in   $f$. }  %
  \label{fig:untainting_rules}
\end{figure}

\subsubsection{Fault-Packing}
\label{sec:fp}
\modif{The number of insecurity checks along symbolic execution is
  important for CT.}{For CT, the number of insecurity checks generated
  along the symbolic execution is substantial.}  The fault-packing
(\textit{fp}) optimization gathers these insecurity checks along a
path and postpones their resolution to the end of the basic block.

\begin{example}[Fault-packing]
  For example, let us consider a basic-block with a path predicated
  \(\pc\). If there are two memory accesses along the basic block that
  evaluate to \(\pair{\varphi}{\varphi'}\) and \(\pair{\phi}{\phi'}\),
  we would normally generate two insecurity queries
  \((\pc \wedge \varphi \neq \varphi')\) and
  \((\pc \wedge \phi \neq \phi')\) -- one for each memory access.
  \textit{fp} regroups these checks into a single query
  \(\big(\pc \wedge ((\varphi \neq \varphi') \lor (\phi \neq
  \phi'))\big)\) sent to the solver at the end of the basic block.
\end{example}

This optimization reduces the number of insecurity queries sent to the
solver and thus helps improving performance. However it degrades the
precision of the counterexample: while checking each instruction
individually precisely points to vulnerable instructions,
fault-packing reduces accuracy to vulnerable basic blocks only.  Note
that even though disjunctive constraints are usually harder to solve
than pure conjunctive constraints, those introduced by \textit{fp} are
very limited (no nesting) and thus do not add much
complexity. Accordingly, they never end up in a performance
degradation (see~\cref{tab:scale_optims}).

\subsection{Theorems}~\label{sec:proofs} In order to define properties
of our symbolic execution, we use $\cleval{}^k$ (resp.\ $\ieval{}^k$),
with $k$ a natural number, to denote $k$ steps in the concrete
(resp.\ symbolic) evaluation.

\begin{definition}[\(\concsym{p}{M}\)]\label{def:concsym}
  We define a concretization relation $\concsym{p}{M}$ between
  concrete and symbolic configurations, where \(M\) is a model and
  \(p \in \{l,r\}\) is a projection on the left or right side of a
  symbolic configuration.  Intuitively, the relation
  $c\! \concsym{p}{M}\! s$ is the concretization of the \(p\)-side of
  the symbolic state \(s\) with the model \(M\).
  Let \(c \mydef \cconf{\locvar_1}{\cregmap}{\cmem}\) and
  \(s \mydef \iconfold{\locvar_2}{\regmap}{\smem}{\pc}\). Formally
  $c\! \concsym{p}{M}\! s$ holds iff \(M\!\sat\!\pc\),
  \(\locvar_1 = \locvar_2\) and for all expression \(e\), either the
  symbolic evaluation of \(e\) gets stuck or we have
  \[\econfold{\regmap}{\smem}{e} \eeval{} \rel{\varphi} ~\wedge~ %
  (M(\proj{\rel{\varphi}}) = \mathtt{bv} \iff c~e \ceeval{}
  \mathtt{bv})\] %

  Notice that because both executions represented in the initial
  configuration \(s_0\) are low-equivalent,
  \(\cconfvar_0 \concsym{l}{M} s_0 ~\wedge~ \cconfvar'_0
  \concsym{r}{M} s_0\) implies that
  \(\cconfvar_0 \loweq \cconfvar_0'\).
\end{definition}

Through this section, we assume that the program \(\prog{}\) is
defined on all locations computed during the symbolic execution. Under
this hypothesis, the symbolic execution can only get stuck when an
expression \(\rel{\varphi}\) is leaked such that
\(\neg \secleak(\rel{\varphi})\). In this case, a vulnerability is
detected and there exists a model $M$ such that
\({M \sat \pc \wedge (\lproj{\rel{\varphi}} \neq
  \rproj{\rel{\varphi}})}\). %

\smallskip The following theorem claims the completeness of our
symbolic execution relatively to an initial symbolic state. If the
program is constant-time up to \(k\), then for each pair of concrete
executions up to \(k\), there exists a corresponding symbolic
execution (no under-approximation). A \emph{proof} is given
in \cref{app:completeness}.
\begin{restatable}[Relative Completeness of
  RelSE]{theorem}{completeness}\label{thm:completeness}
  Let \(P\) be a program constant-time up to \(k\) and $s_0$ be a
  symbolic initial configuration for $P$. For every concrete states
  $c_0$, $c_k$, $c_0'$, $c_k'$, and model $M$ such that
  ${c_0 \concsym{l}{M} s_0} ~\wedge~ {c_0' \concsym{r}{M} s_0}$, %
  if $c_0 \cleval{\leakvar}^k c_k$ and
  $c_0' \cleval{\leakvar'}^k c_k'$ with \(\leakvar = \leakvar'\) then
  there exists a symbolic configuration \(s_k\) and a model \(M'\)
  such that: %
  \[s_0 \ieval{}^k s_k ~\wedge~ %
    c_k \concsym{l}{M'} s_k ~\wedge~ c_k' \concsym{r}{M'} s_k\]
\end{restatable}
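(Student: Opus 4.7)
The plan is to proceed by induction on $k$. For the base case $k=0$, take $s_k := s_0$ and $M' := M$: both concretization hypotheses hold by assumption and the two empty leakage sequences trivially agree, so nothing is to be shown. For the inductive step, assume the statement holds for $k$ and consider executions of length $k{+}1$. Decompose them as
\[
  c_0 \cleval{\leakvar_{<k}}^k c_k \cleval{\ell}{} c_{k+1},
  \qquad
  c_0' \cleval{\leakvar'_{<k}}^k c_k' \cleval{\ell'}{} c_{k+1}',
\]
so that $\leakvar = \leakvar_{<k}\cdot\ell$ and $\leakvar' = \leakvar'_{<k}\cdot\ell'$. Since the program is constant-time up to $k{+}1$ and $c_0 \loweq c_0'$ (which follows from $c_0 \concsym{l}{M} s_0$ and $c_0' \concsym{r}{M} s_0$ by the note at the end of Definition~\ref{def:concsym}), we obtain $\leakvar_{<k} = \leakvar'_{<k}$ and $\ell = \ell'$. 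The induction hypothesis applied to the $k$-prefixes yields a symbolic configuration $s_k$ and a model $M''$ with $s_0 \ieval{}^k s_k$, $c_k \concsym{l}{M''} s_k$ and $c_k' \concsym{r}{M''} s_k$.

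It remains to take one more symbolic step and preserve the concretization. By the standing assumption, $\prog{}$ is defined at the current location of $s_k$, so the only way the symbolic execution can fail to progress is for some leaked expression $\rel{\varphi}$ at $s_k$ to fail the $\secleak$ check, i.e.\ for the query $\pc \wedge \lproj{\rel{\varphi}} \neq \rproj{\rel{\varphi}}$ to be satisfiable by some model $N$. From such an $N$ one can read off a pair of concrete configurations $d_0, d_0'$ that (i) are low-equivalent (they agree on low inputs because $s_0$ stores those as simple expressions) and (ii) execute for $k{+}1$ steps producing leakages that differ on the last element. This contradicts constant-time up to $k{+}1$. Hence $\secleak$ succeeds on every expression leaked at step $k{+}1$ and a step $s_k \ieval{} s_{k+1}$ exists.

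To close the induction we exhibit a model $M'$ with $c_{k+1} \concsym{l}{M'} s_{k+1}$ and $c_{k+1}' \concsym{r}{M'} s_{k+1}$. We case-split on the instruction at the current location using the rules of Figure~\ref{fig:eval_instr_sha_excerpt}. For assignments and stores, the symbolic update mirrors the concrete one, so extending $M''$ by the values of any newly introduced symbolic variables on each side gives the required $M'$. For conditional jumps, the shared path hypothesis plus $\ell = \ell'$ forces both concrete executions to take the same branch as the symbolic step, which is exactly the branch whose guard is satisfied by $M''$; we strengthen $M''$ to $M'$ so as to satisfy the updated path predicate. Dynamic jumps are analogous, with the concrete target $l'$ supplied by the concrete semantics witnessing satisfiability of the added constraint. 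In each case the concretization of every subterm is preserved because symbolic evaluation is compositional and $M'$ extends $M''$, yielding the desired $s_{k+1}$ and $M'$. The main obstacle is this last case analysis, and in particular the argument that $\secleak$ never blocks progress under the CT hypothesis: both hinge on the fact that the relational symbolic state faithfully represents every pair of low-equivalent concrete traces that follow the current path.
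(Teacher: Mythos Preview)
Your approach is essentially the paper's: induction on $k$, contradiction via the CT hypothesis to show $\secleak$ cannot block, case analysis on the instruction, and explicit extension of the model for the new step. The paper details only the \rulename{store} case and builds $M'$ by updating the interpretation of the fresh memory variable, exactly as you sketch.

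One point deserves tightening. When you write ``from such an $N$ one can read off a pair of concrete configurations $d_0, d_0'$ that \ldots\ execute for $k{+}1$ steps producing leakages that differ on the last element,'' the existence of those concrete runs is not automatic from $N \models \pi$. The paper closes this by invoking Theorem~\ref{thm:correctness} (Correctness): from $s_0 \ieval{}^k s_k$ and $d_0 \concsym{l}{N} s_0$, $d_k \concsym{l}{N} s_k$ (and similarly on the right) one obtains actual concrete executions $d_0 \cleval{}^k d_k$ and $d_0' \cleval{}^k d_k'$, after which the $(k{+}1)$-th concrete step produces the mismatching leak. Without this appeal the contradiction argument is incomplete, so you should make the dependency on Correctness explicit.
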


The following theorem claims the correctness of our symbolic
execution, stating that for each symbolic execution and model \(M\)
satisfying the path predicate, the concretization of the symbolic
execution with \(M\) corresponds to a valid concrete execution (no
over-approximation).  A \emph{proof} is given in
\cref{app:correctness}.

\begin{restatable}[Correctness of RelSE]{theorem}{correctness}\label{thm:correctness}
  For every symbolic configurations $s_0$, $s_k$ such that
  \(s_0 \ieval{}^k s_k\) and for every concrete configurations
  \(c_0\), \(c_k\) and model \(M\), such that
  \(c_0 \concsym{p}{M} s_0\) and \(c_k \concsym{p}{M} s_k\),
  there exists a concrete execution \(c_0 \cleval{}^k c_k\).
\end{restatable}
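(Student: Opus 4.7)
The plan is to proceed by induction on the number of steps $k$.

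For the base case $k=0$ we have $s_0 = s_k$, so the two hypotheses reduce to $c_0 \concsym{p}{M} s_0$ and $c_k \concsym{p}{M} s_0$. By Definition~\ref{def:concsym}, both $c_0$ and $c_k$ must have the same location as $s_0$, and they agree via $M$ on the evaluation of every expression that the symbolic state can evaluate. Hence they coincide on all state observable by the program, and the zero-step relation $c_0 \cleval{}^{0} c_k$ holds trivially.

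For the inductive step, decompose $s_0 \ieval{}^{k+1} s_{k+1}$ as $s_0 \ieval{}^{k} s_k \ieval{} s_{k+1}$. Because the path predicate only grows along symbolic execution, $M \sat \pc_{k+1}$ entails $M \sat \pc_k$. Introduce an intermediate concrete configuration $c_k^{\star}$ obtained by concretizing $s_k$ through $M$ on side $p$: its location is the one in $s_k$, its register map is the $M$-image of $\regmap$, and its memory is the $M$-image of $\proj{\smem}$. By construction $c_k^{\star} \concsym{p}{M} s_k$, and applying the induction hypothesis to the first $k$ steps yields $c_0 \cleval{}^{k} c_k^{\star}$. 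It then remains to match the final symbolic step $s_k \ieval{} s_{k+1}$ by a concrete step $c_k^{\star} \cleval{} c_{k+1}$, after which uniqueness (as in the base case) identifies $c_k^{\star}$ and $c_{k+1}$ with the given $c_{k+1}$.

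The matching of a single step is done by case analysis on the symbolic rule applied. For \rulename{assign}, \rulename{load}, and \rulename{store}, the correspondence between the symbolic evaluation $\eeval{}$ and the concrete evaluation $\ceeval{}$ (guaranteed by Definition~\ref{def:concsym}) shows that the updated register map or memory in $c_{k+1}$ equals the $M$-image of the updated component in $s_{k+1}$. For \rulename{ite-true} and \rulename{ite-false}, the new conjunct added to $\pc_{k+1}$ forces, via $M \sat \pc_{k+1}$, the concrete condition to be true (respectively false), so the concrete control-flow chooses the same successor as the symbolic one. For \rulename{d\_jump}, the concretized target coincides with the symbolic $l'$ since $M$ satisfies the equality constraint added to $\pc_{k+1}$.

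The main obstacle is the \rulename{load} case under the on-the-fly optimization of \cref{sec:row}: the symbolic value returned is $\lookup(\smem, i)$ rather than a raw pair of $select$ expressions, so correctness requires a separate soundness lemma stating that, for any path predicate satisfied by $M$, the $M$-image of $\lookup(\smem, i)$ equals the $M$-image of $select(\proj{\smem}, i)$. This follows by induction on the history of $store$ operations constituting $\smem$, using the fact that the syntactic comparison $\compare$ is faithful to semantic (dis)equality on normalized terms, hence preserved under the interpretation induced by $M$. Once this lemma is established, the case analysis closes uniformly and the induction goes through.
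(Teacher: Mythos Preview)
Your proposal is correct and follows essentially the same route as the paper's proof: induction on $k$, using monotonicity of the path predicate to transport the model back to the intermediate symbolic state, constructing (what the paper calls) a unique intermediate concretization, and then matching the last symbolic step by a concrete one. The paper compresses the single-step case analysis into the phrase ``the symbolic execution is updated without over-approximation,'' whereas you spell out the per-rule argument and, in addition, explicitly isolate the soundness obligation for the \textit{FlyRow} $\lookup$ optimization---a point the paper's correctness proof leaves implicit.
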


The following is our main result. If the symbolic execution does not
get stuck due to a satisfiable insecurity query, then the program is
constant-time. The \emph{proof} is given in \cref{app:ct-security}.
\begin{restatable}[Bounded-Verification for
  CT]{theorem}{security}\label{thm:bv}
  Let $s_0$ be a symbolic initial configuration for a program $P$. If
  the symbolic evaluation does not get stuck, then $P$ is
  constant-time w.r.t.\ $s_0$. Formally, if for all $k$,
  $s_0 \ieval{}^k s_k$ then for all initial configurations
  \(\cconfvar_0\) and \(\cconfvar_0'\) and model \(M\) such that
  \(\cconfvar_0 \concsym{l}{M} s_0\), and
  \(\cconfvar'_0 \concsym{r}{M} s_0\),
  \begin{multline*}
    \cconfvar_0 \loweq \cconfvar_0'\ ~\wedge~ %
    \cconfvar_0  \cleval{\leakvar}^k \cconfvar_k ~\wedge~ %
    \cconfvar_0' \cleval{\leakvar'}^k \cconfvar'_k %
    \implies \leakvar = \leakvar'
  \end{multline*}
  Additionally, if \(s_0\) is fully symbolic, then \(P\) is
  constant-time.
\end{restatable}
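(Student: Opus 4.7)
The plan is to prove the claim by induction on $k$, leveraging the relative completeness (Theorem~\ref{thm:completeness}), the correctness (Theorem~\ref{thm:correctness}), and the soundness of the $\secleak$ predicate. The key observation is that if the symbolic execution never gets stuck, then at every symbolic step $s_{i} \ieval{} s_{i+1}$ any leaked relational formula $\rel{\varphi}$ must satisfy $\secleak(\rel{\varphi})$; by definition of $\secleak$, this means either $\rel{\varphi}$ is simple (so its two projections are syntactically identical), or the insecurity query $\pc \wedge (\lproj{\rel{\varphi}} \neq \rproj{\rel{\varphi}})$ is unsatisfiable, i.e.\ $M(\lproj{\rel{\varphi}}) = M(\rproj{\rel{\varphi}})$ under every model $M$ with $M \sat \pc$.

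For the base case $k = 0$ both leakage sequences are empty and the statement holds vacuously. For the inductive step, assume that the $k{-}1$-step prefixes $c_0 \cleval{\leakvar_0}^{k-1} c_{k-1}$ and $c_0' \cleval{\leakvar_0'}^{k-1} c_{k-1}'$ produce identical leakages $\leakvar_0 = \leakvar_0'$ (so the program is CT up to $k-1$ on this pair). Then Theorem~\ref{thm:completeness} applies and yields a symbolic configuration $s_{k-1}$ and a model $M_{k-1}$ such that $s_0 \ieval{}^{k-1} s_{k-1}$ with $c_{k-1} \concsym{l}{M_{k-1}} s_{k-1}$ and $c_{k-1}' \concsym{r}{M_{k-1}} s_{k-1}$. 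Because the symbolic evaluation does not get stuck, there is a step $s_{k-1} \ieval{} s_k$ producing some relational leak $\rel{\varphi}$ with $\secleak(\rel{\varphi})$. Applying Theorem~\ref{thm:correctness} along each projection, the concrete successor steps $c_{k-1} \cleval{\ell} c_k$ and $c_{k-1}' \cleval{\ell'} c_k'$ are the concretizations, under $M_{k-1}$ (suitably extended to any freshly introduced variables), of $\lproj{\rel{\varphi}}$ and $\rproj{\rel{\varphi}}$ respectively. Since $\secleak(\rel{\varphi})$ guarantees $M_{k-1}(\lproj{\rel{\varphi}}) = M_{k-1}(\rproj{\rel{\varphi}})$, we conclude $\ell = \ell'$, which extends the leakage equality to step $k$ and closes the induction. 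The final clause, that full symbolic $s_0$ implies unconditional CT, follows because every pair of low-equivalent concrete initial states can then be exhibited as projections of $s_0$ under an appropriate model, so the hypothesis of the main claim covers every such pair.

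The main obstacle will be the bookkeeping around models: the model $M_{k-1}$ returned by completeness at step $k-1$ is a priori different from the one that would be produced at step $k$, so the argument must either extend $M_{k-1}$ into a model of the path predicate of $s_k$ (exploiting that one symbolic step only adds fresh variables and conjuncts consistent with the concrete transition) or re-invoke completeness at step $k$ and relate the two models on the variables occurring in $\rel{\varphi}$. A closely related subtlety is justifying that both concrete executions reach the same program location at step $k-1$ so that a common symbolic step $s_{k-1} \ieval{} s_k$ mirrors them; this is exactly where the restriction of our RelSE to pairs of traces following the same path comes in, and it is underwritten by the inductive hypothesis that control-flow leakages have matched so far.
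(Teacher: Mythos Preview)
Your proposal is correct and follows essentially the same inductive route as the paper: use completeness (Theorem~\ref{thm:completeness}) to obtain a symbolic configuration $s_{k-1}$ with $c_{k-1}\concsym{l}{M}s_{k-1}$ and $c_{k-1}'\concsym{r}{M}s_{k-1}$, use the non-stuck hypothesis to conclude that $\secleak$ held on the leaked relational formula, and deduce that the two concrete leakages at step $k$ coincide. The paper phrases the last step as a contradiction (assume $\mathtt{bv}\neq\mathtt{bv'}$ and derive that $\secleak$ fails), whereas you argue directly; the content is the same.

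One small correction: your appeal to Theorem~\ref{thm:correctness} at the successor step is misplaced. What you actually need there is just Definition~\ref{def:concsym}: from $c_{k-1}\concsym{l}{M_{k-1}}s_{k-1}$ you already get that the concrete value leaked by $c_{k-1}$ equals $M_{k-1}(\lproj{\rel{\varphi}})$ (and similarly on the right), so $\secleak(\rel{\varphi})$ together with $M_{k-1}\sat\pc$ immediately yields $\ell=\ell'$. Correctness (Theorem~\ref{thm:correctness}) runs in the other direction (symbolic to concrete) and is not required for this theorem; the paper's proof does not invoke it either. Your remarks about model extension and about both concrete runs sharing the same location at step $k-1$ are apt and match exactly the bookkeeping the paper leaves implicit.
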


The following theorem expresses that when the symbolic execution gets
stuck, then there is a concrete path that violates constant-time. 
The \emph{proof} is given in \cref{app:ct-bf}. 
\begin{restatable}[Bug-Finding for CT]{theorem}{bugfinding}\label{thm:bf}
  Let $s_0$ be an initial symbolic configuration for a program $P$. If
  the symbolic evaluation gets stuck in a configuration \(s_k \) then
  $P$ is not constant-time. Formally, if there exists $k$ st.
  \(s_0 \ieval{}^k s_k\) and \(s_k\) is stuck, then there exists a
  model \(M\) and concrete configurations
  \(\cconfvar_0 \concsym{l}{M} s_0\), %
  \(\cconfvar_0' \concsym{r}{M} s_0 \), %
  \(\cconfvar_k \concsym{l}{M} s_k \) and %
  \(\cconfvar_k' \concsym{r}{M} s_k\) such that, %
  \begin{multline*}%
    \cconfvar_0 \loweq \cconfvar_0' ~\wedge~%
    \cconfvar_0  \cleval{\leakvar}^k \cconfvar_k ~\wedge~ %
    \cconfvar_0' \cleval{\leakvar'}^k \cconfvar'_k %
    \wedge \leakvar \neq \leakvar' %
  \end{multline*}
\end{restatable}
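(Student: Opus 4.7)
The plan is to exploit the witness model guaranteed by the failed $\secleak$ check at $s_k$ and use Theorem~\ref{thm:correctness} twice (once per projection) to materialize two concrete executions that witness a CT violation.

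First I would unfold the hypothesis that $s_k$ is stuck. By the standing assumption on $P$ (recalled just before Theorem~\ref{thm:completeness}), the only way for symbolic evaluation to get stuck is that a relational expression $\rel{\varphi}$ is leaked at step $k$ with $\neg\secleak(\rel{\varphi})$. By the definition of $\secleak$, this means $\rel{\varphi} = \pair{\varphi_l}{\varphi_r}$ and the insecurity query $\pc_k \wedge (\varphi_l \neq \varphi_r)$ is satisfiable, so we obtain a model $M$ such that $M \sat \pc_k$ and $M(\varphi_l) \neq M(\varphi_r)$. This $M$ will be the witness model of the theorem.

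Second, I would construct the four required concrete configurations by concretizing the symbolic states under $M$. Let $c_0$ (resp.\ $c_0'$) be obtained from $s_0$ by replacing every symbolic expression by its evaluation under the left (resp.\ right) projection via $M$, and similarly define $c_k$ and $c_k'$ from $s_k$. By construction, $c_0 \concsym{l}{M} s_0$, $c_0' \concsym{r}{M} s_0$, $c_k \concsym{l}{M} s_k$ and $c_k' \concsym{r}{M} s_k$. The remark at the end of Definition~\ref{def:concsym} immediately gives $c_0 \loweq c_0'$. Applying Theorem~\ref{thm:correctness} with projection $l$ yields a concrete execution $c_0 \cleval{\leakvar}^k c_k$, and applying it with projection $r$ yields $c_0' \cleval{\leakvar'}^k c_k'$.

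Finally I would argue $\leakvar \neq \leakvar'$. The key observation is that the concrete per-step leakages are precisely the $M$-concretizations of the symbolic per-step leakages, projected on the appropriate side. This can be established by a small induction on the execution length that mirrors the one underlying Theorem~\ref{thm:correctness}, tracking leakages step by step rather than only states. Up to step $k-1$ the previously executed $\secleak$ checks succeeded, so the left and right projections of the step leakages coincide; but at step $k$ the leaked expression $\rel{\varphi} = \pair{\varphi_l}{\varphi_r}$ satisfies $M(\varphi_l) \neq M(\varphi_r)$ by our choice of $M$, so the $k$-th component of $\leakvar$ differs from that of $\leakvar'$, hence $\leakvar \neq \leakvar'$.

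\textbf{Main obstacle.} The routine parts (extracting $M$, concretizing, low-equivalence) are direct consequences of prior definitions and of Theorem~\ref{thm:correctness}. The delicate step is the leakage-level correspondence used at the end: Theorem~\ref{thm:correctness} as stated guarantees the existence of a concrete execution of the right length, but one has to either strengthen it with a ``leakage commutes with concretization'' clause, or redo its induction while carrying the leakage explicitly, in order to conclude that the symbolic leakage projected via $M$ on side $p$ coincides with the concrete leakage emitted by the $p$-side execution. Once that lemma is in place, concluding $\leakvar \neq \leakvar'$ from $M(\varphi_l) \neq M(\varphi_r)$ is immediate.
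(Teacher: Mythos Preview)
Your proposal is correct and follows essentially the same route as the paper: extract the model $M$ from the failed $\secleak$ check, concretize $s_0$ and $s_k$ on both sides, invoke Theorem~\ref{thm:correctness} twice, and use the remark in Definition~\ref{def:concsym} for low-equivalence.

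The one point where you do more work than necessary is your ``main obstacle''. The paper does not track leakages through an induction or strengthen Theorem~\ref{thm:correctness}; it only needs the leakage at the \emph{stuck} step. Since $c_k \concsym{l}{M} s_k$ and $c_k' \concsym{r}{M} s_k$, both concrete states are at the same location as $s_k$ and hence evaluate the very instruction whose leaked subexpression $e$ satisfies $\econfold{\regmap}{\smem}{e} \eeval{} \rel{\varphi}$. Definition~\ref{def:concsym} already contains the clause $M(\proj{\rel{\varphi}}) = \mathtt{bv} \iff c\,e \ceeval{} \mathtt{bv}$, so the concrete leakages at that step are exactly $M(\varphi_l)$ and $M(\varphi_r)$, which differ by choice of $M$. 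No per-step leakage-tracking lemma is needed; the expression-level correspondence built into $\concsym{p}{M}$ does the job in one shot (the paper then performs a trivial case split over \rulename{load}, \rulename{store}, \rulename{ite}, \rulename{d\_jump} to name which expression is leaked).
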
 

\section{Implementation}
\label{sec:implem}

We implemented our relational symbolic execution, \brelse{}, on top of
the binary-level analyzer \binsec{}~\cite{david_binsec/se:_2016}.
\brelse{} takes as input a x86 or ARM executable, a specification of
high inputs and an initial memory configuration (possibly fully
symbolic). \modif{It then}{It} performs bounded exploration of the
program under analysis (up to a user-given depth), and reports the
identified CT violations together with counterexemples (i.e., initial
configurations leading to the vulnerabilities).
In case no violation is reported, if the initial configuration is
fully symbolic and the program has been explored exhaustively then the
program is \emph{proven} secure.

\brelse{} is composed of a \emph{relational symbolic exploration}
module and an \emph{insecurity analysis} module. The symbolic
exploration module chooses the path to explore, updates the symbolic
configuration, builds the path predicate and ensure that it is
satisfiable. The insecurity analysis module builds insecurity queries
and check that they are not satisfiable. 

We explore the program in a depth-first search manner and we rely on
the Boolector \textsc{smt}-solver ~\cite{niemetz_boolector_2014},
currently the best on theory
\abv{}~\cite{noauthor_smt-comp_nodate,farinier_arrays_2018}.

\myparagraph{Overall architecture} is illustrated in
\cref{fig:archi}. The \textsc{Disasm} module loads the executable and
lifts the machine code to the DBA intermediate representation
\cite{gopalakrishnan_bincoa_2011}. Then, the analysis is performed
\modif{}{by the \textsc{Rel} module} on the DBA code.
\modif{The \textsc{Rel} plugin interacts with the \textsc{smt}-solver
  through the interface}{The} \textsc{Formula} module
\modif{which}{} is in charge of \modif{building the queries, and pass
  them to}{building and simplifying formulas, and sending the queries
  to} the \textsc{smt}-solver.  The queries are exported to the
SMTLib~\cite{barrett_smt-lib_2017} standard which permits to interface
with many off-the-shelf \textsc{smt}-solvers.
The \modif{plugin}{\textsc{Rel} plugin} represents \(\approx\)3.5k
lines of Ocaml.

\begin{figure}[!tbp]
  \centering
  \includegraphics[width=.9\linewidth]{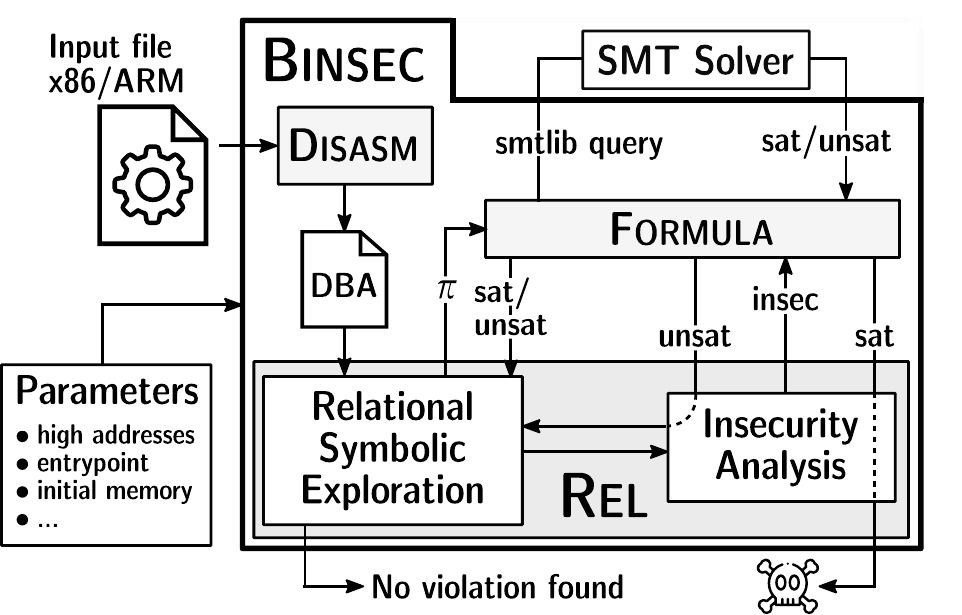}
  \caption{\binsec{} architecture with \brelse{} plugin.}
  \label{fig:archi}
\end{figure}

\myparagraph{Usability.}
\label{sec:usability}
Binary-level semantic analyzers tend to be harder to use than their
source-level counterparts as inputs are more difficult to specify and
results more difficult to interpret. In order to mitigate this point,
we propose a vizualisation mechanism (based on IDA, which highlight
coverage and violations) and easy input specification (using dummy
functions, cf.~\cref{app:stubs}) when source-level information is
available.

\section{Experimental Results}
\label{sec:expes}

We answer the following research questions:
\begin{description}
\item[RQ1: Effectiveness] Is \brelse{} able to perform constant-time
  analysis on real cryptographic \modif{binary codes}{binaries}, for
  both bug finding and bounded-verification?

\item[RQ2: Genericity] Is \brelse{}  generic enough to encompass several
  architectures and compilers?

\item[RQ3: Comparison vs. Standard Approaches] How does
  \brelse{} scale compared to traditional approaches based on standard SC and RelSE? 

\item[RQ4: Impact of simplifications] What are the respective  impacts of 
  our different simplifications?

\item[RQ5: Comparison vs. SE] What is the overhead of \brelse{} compared
  to  standard SE, and can our 
  simplifications be useful for standard SE? 
\end{description}

Experiments were performed on a laptop with an Intel(R) Core(TM)
i5-2520M CPU @ 2.50GHz processor and 32GB of RAM, running Linux Mint
18.3 Sylvia. %
Similarly to related work (e.g.~\cite{doychev_cacheaudit:_2015}),
\texttt{esp} is initialized to a concrete value, we start the analysis
from the beginning of the \texttt{main} function, we statically
allocate data structures and the length of keys and buffers is fixed
(e.g.~for Curve25519-donna~\cite{bernstein_curve25519:_2006}, three
256-bit buffers are used to store the input, the output and the secret
key). When not stated otherwise, programs are compiled for x86 (32bit)
with their default compiler setup.

\subsection{Effectiveness  (RQ1,RQ2)}
\label{sec:effectiveness}
We carry out three experiments to assess the effectiveness of our
technique: (1) bounded-verification of secure cryptographic primitives
previously verified at source- or LLVM-
level~\cite{blazy_verifying_2017,almeida_verifying_2016,zinzindohoue_hacl*:_2017},
(2) automatic replay of known bug
studies~\cite{simon_what_2018,almeida_verifying_2016,al_fardan_lucky_2013},
(3) automatic study of CT preservation by compilers extending prior
work~\cite{simon_what_2018}. %
Overall, our study encompasses 338 representative code samples for a total of 70k machine instructions  and 22M  unrolled instructions  
(i.e., instructions explored by \brelse{}).   

\myparagraph{Bounded-Verification (RQ1).} 
\label{sec:bounded-verification}
We analyze a large range of \emph{secure} constant-time cryptographic
primitives (296 samples, 64k instructions), comprising:
(1) several basic constant-time utility functions such as selection
functions~\cite{simon_what_2018}, sort
functions~\cite{noauthor_imdea-software/verifying-constant-time_nodate}
and utility functions from
HACL*\footnote{\url{https://github.com/project-everest/hacl-star/blob/master/snapshots/hacl-c/Hacl_Policies.c}
  and
  \url{https://github.com/project-everest/hacl-star/blob/master/snapshots/hacl-c/kremlib_base.h}\label{fnote:hacl}}
and
OpenSSL\footnote{\url{https://github.com/xbmc/openssl/blob/master/crypto/constant_time_locl.h}\label{fnote:openssl}};
(2) a set of representative constant-time cryptographic primitives
already studied in the literature on source
code~\cite{blazy_verifying_2017} or
LLVM~\cite{almeida_verifying_2016}, including implementations of
TEA~\cite{goos_tea_1995},
Curve25519-donna%
~\cite{bernstein_curve25519:_2006}, \texttt{aes} and \texttt{des}
encryption functions taken from BearSSL~\cite{pornin_bearssl_nodate},
cryptographic primitives from libsodium~\cite{hutchison_security_2012}
and the constant-time padding remove function
\texttt{tls-cbc-remove-padding} from
OpenSSL~\cite{almeida_verifying_2016};
(3) a set of functions from the HACL*
library~\cite{zinzindohoue_hacl*:_2017}.

Results are reported in~\cref{tab:bounded-verif}. For each program,
\brelse{} is able to perform an exhaustive exploration without finding
any violations of constant-time in less than 20 minutes. Note that
exhaustive exploration is possible because in cryptographic programs,
bounding the input size bounds loops. These results show that
\brelse{} can perform bounded-verification of real-world cryptographic
implementations at binary-level in a reasonable time, which was
impractical with previous approaches based on self-composition or
standard RelSE (see~\cref{sec:scalability}).

\emph{This is the first automatic CT-analysis of these cryptographic
  libraries at the binary-level.}

\begin{table}[!htbp]
  \centering
  \begin{tabular}{|lr|r|r|r|c|}
    \cline{3-6}
    \multicolumn{2}{l|}{}
    & \multicolumn{1}{c|}{\(\approx\text{\#I}\)}
    & \multicolumn{1}{c|}{\#\(\text{I}_{u}\)}
    & \multicolumn{1}{c|}{T}
    & \multicolumn{1}{c|}{S}
    \\ \hline
    \multirow{4}{*}{utility}
    & ct-select   & 1015 &  1507 &  .21 & 29 \(\times\) \ccmark{} \\
    & ct-sort & 2400 &  1782 &  .24 & 12 \(\times\) \ccmark{} \\
    & Hacl*   & 3850 & 90953 & 9.34 & 110 \(\times\) \ccmark{} \\
    & OpenSSL & 4550 &  5113 &  .75 & 130 \(\times\) \ccmark{} \\   
    \hline
    \multirow{2}{*}{tea}
        & \texttt{-O0} & 290 & 953 & .12 & \ccmark{} \\
        & \texttt{-O3} & 250 & 804 & .12 & \ccmark{} \\
    \hline
    \multirow{2}{*}{donna}
        & \texttt{-O0} & 7083 & 10.2M & 1166 & \ccmark{} \\
        & \texttt{-O3} & 4643 &  2.7M &  401 & \ccmark{} \\
    \hline
    \multirow{4}{*}{libsodium}
    & salsa20  &  1627 &  6.5k &  .7 & \ccmark{} \\
    & chacha20 &  2717 & 30.0k & 5.0 & \ccmark{} \\
    & sha256   &  4879 & 38.7k & 4.5 & \ccmark{} \\
    & sha512   & 16312 & 62.1k & 7.1 & \ccmark{} \\
    \hline
    \multirow{4}{*}{Hacl*}
    & chacha20   & 1221 &  5.0k &  1.0 & \ccmark{} \\
    & curve25519 & 8522 &  9.4M & 1110 & \ccmark{} \\
    & sha256     & 1279 & 16.8k &  2.8 & \ccmark{} \\
    & sha512     & 2013 & 31.8k &  4.3 & \ccmark{} \\   
    \hline
    \multirow{2}{*}{BearSSL}
    & aes\_ct   & 357 &  3.5k  &   .6 & \ccmark{} \\
    & des\_ct   & 682 & 38.5k  & 33.9 & \ccmark{} \\
    \hline
    \multirow{1}{*}{OpenSSL}
    & tls-rempad-patch & 424 & 35.7k & 406 & \ccmark{} \\
    \hline
    \hline
    \textbf{Total} &  & 64114 & 22.7M &  3154 & 296 \(\times\) \ccmark{} \\
    \hline
  \end{tabular}
  \caption{Bounded verification of constant-time cryptographic
    implementations where \#\(\text{I}\) (resp. \#\(\text{I}_{u}\))
    is the number of static (resp. unrolled) instructions, T is the
    execution time in seconds, and S is the status (\hourglass{} for
    timeout or \ccmark{} for exhaustive
    exploration).}\label{tab:bounded-verif}
\end{table}

\myparagraph{Bug-Finding (RQ1).}
\label{sec:bug-finding}
We take three known bug studies from the literature
\cite{simon_what_2018,noauthor_imdea-software/verifying-constant-time_nodate,al_fardan_lucky_2013}
and replay them automatically at binary-level (42 samples, 6k
instructions), including:
(1) binaries compiled from constant-time sources of a selection
function~\cite{simon_what_2018} and sort
functions~\cite{noauthor_imdea-software/verifying-constant-time_nodate},
(2) non-constant-time versions of \texttt{aes} and \texttt{des} from
BearSSL~\cite{pornin_bearssl_nodate},
(3) the non-constant-time version of OpenSSL's
\texttt{tls-cbc-remove-padding}\footnote{\url{https://github.com/openssl/openssl/blob/OpenSSL_1_0_1/ssl/d1_enc.c}\label{fnote:lucky13}}
responsible for the famous Lucky13 attack~\cite{al_fardan_lucky_2013}.

Results are reported in~\cref{tab:bug-finding} with \emph{fault-packing
  disabled} to report vulnerabilities at the instruction level. All
bugs have been found within the timeout.
Interestingly, we found 3 \emph{unexpected binary-level vulnerabilities
  (from secure source codes) that slipped through previous analysis}:

\begin{itemize}
\item function \texttt{ct\_select\_v1}  \cite{simon_what_2018} was deemed secured through binary-level manual inspection, 
still  we confirm  
that any version of \texttt{clang} with \texttt{-O3} introduces a
secret-dependent conditional jump which violates constant-time; 

\item functions \texttt{ct\_sort} and \texttt{ct\_sort\_mult},
  verified by ct-verif~\cite{almeida_verifying_2016} (LLVM bytecode
  compiled with \texttt{clang}), are vulnerable when compiled with
  \texttt{gcc -O0} or \texttt{clang -O3 -m32 -march=i386}.
\end{itemize}

\noindent A few more details on these vulnerabilities are provided in
the next study.  Finally, we describe the application of \brelse{} to
the Lucky13 attack in \cref{app:lucky13}.

\bgroup
\footnotesize
\setlength{\tabcolsep}{1.5pt}
\begin{table}[!htbp]
  \centering
  \begin{tabularx}{\linewidth}{|lr|r|r|r|c|c|r|C|}
    \cline{3-9}
    \multicolumn{2}{l|}{}
    & \multicolumn{1}{c|}{\(\approx\text{\#I}\)}
    & \multicolumn{1}{c|}{\#\(\text{I}_{u}\)}
    & \multicolumn{1}{c|}{T}
    & \begin{tabular}{@{}c@{}}CT\\src\end{tabular}
    & \begin{tabular}{@{}c@{}}S\end{tabular}
    & \multicolumn{1}{c|}{\bug{}}
    & Comment
    \\ \hline
    \multirow{2}{*}{utility}
    & ct-select &  735 &  767 &  .29 & Y & 21\(\times\)\cxmark & 21 & 1 new \cxmark{}\\
    & ct-sort   & 3600 & 7513 & 13.3 & Y & 18\(\times\)\cxmark & 44 & 2 new \cxmark{}\\
    \hline
    \multirow{2}{*}{BearSSL}
    & aes\_big & 375 &    873 & 1574 & N & \cxmark & 32 & - \\
    & des\_tab & 365 &  10421 &  9.4 & N & \cxmark &  8 & - \\
    \hline
    \multicolumn{2}{|@{}l@{}|}{\multirow{1}{*}{
    \begin{tabular}{@{}lr@{}}
      \multirow{1}{*}{OpenSSL}
      & tls-rempad-luk13 \\
    \end{tabular}
    }}
       & 950 &  11372 & 2574 & N & \cxmark & 5 & - \\
    \hline
    \hline
    \textbf{Total} &  & 6025 & 30946 & 4172 & - & 42\(\times\)\cxmark & 110 & - \\
    \hline
  \end{tabularx}
  \caption{Bug-finding of constant-time in cryptographic
    implementations where \#\(\text{I}\) (resp. \#\(\text{I}_{u}\))
    is the number of static (resp. unrolled) instructions, T is the
    execution time in seconds, CT src means that the source 
    is constant-time, S is the status (\cxmark{}
    for insecure program), and \bug{} is the number of
    bugs.}\label{tab:bug-finding}
\end{table}
\egroup

\myparagraph{Effects of compiler optimizations on CT (RQ1, RQ2).}  
\label{sec:compilers}
Simon \emph{et al.}~\cite{simon_what_2018} \emph{manually} analyse
whether \texttt{clang} optimizations break the constant-time property,
for 5 different versions of a selection function. %
We reproduce their analysis in an \emph{automatic} manner and {\it
  extend it significantly}, adding:
29 new functions, a newer version of \texttt{clang}, the ARM
architecture, the \texttt{gcc} compiler and
\texttt{arm-linux-gnueabi-gcc} version 5.4.0 for ARM -- for a total of
408 executables (192 in the initial study).
Results are
presented in \cref{tab:verif_compilers}. 

\begin{table}[!htbp]
 \setlength{\tabcolsep}{1.6pt}
 \centering
  \begin{tabularx}{\linewidth}{|X|cc|cc|cc||cc|cc||cc|}
    \cline{2-13}
    \multicolumn{1}{c|}{}
    & \multicolumn{2}{c|}{cl-3.0}
    & \multicolumn{2}{c|}{cl-3.9}
    & \multicolumn{2}{c||}{\textbf{cl-7.1}}
    & \multicolumn{2}{c|}{\textbf{gcc-5.4}}
    & \multicolumn{2}{c||}{\textbf{gcc-8.3}}
    & \multicolumn{2}{c|}{\textbf{arm-gcc}}\\
    \cline{2-13}
    
    \multicolumn{1}{c|}{}
    & \texttt{O0} & \texttt{O3} & \texttt{O0} & \texttt{O3}
    & \texttt{O0} & \texttt{O3} & \texttt{O0} & \texttt{O3}
    & \texttt{O0} & \texttt{O3} & \texttt{O0} & \texttt{O3}\\
    \hline
    ct\_select\_v1 & \ccmark{} & \circled{\cxmark{}} & \ccmark{} & \circled{\cxmark{}} & \ccmark{} & \cxmark{} & \ccmark{} & \ccmark{} & \ccmark{} & \ccmark{} & \ccmark & \ccmark \\
    ct\_select\_v2 & \ccmark{} & \cxmark{} & \ccmark{} & \cxmark{} & \ccmark{} & \cxmark{} & \ccmark{} & \ccmark{} & \ccmark{} & \ccmark{} & \ccmark & \ccmark \\
    ct\_select\_v3 & \ccmark{} & \ccmark{} & \ccmark{} & \cxmark{} & \ccmark{} & \cxmark{} & \ccmark{} & \ccmark{}  & \ccmark{} & \ccmark{} & \ccmark & \ccmark \\
    ct\_select\_v4 & \ccmark{} & \cxmark{} & \ccmark{} & \cxmark{} & \ccmark{} & \cxmark{} & \ccmark{} & \ccmark{} & \ccmark{} & \ccmark{} & \ccmark & \ccmark \\
    select\_naive (insecure) & \cxmark{} & \cxmark{} & \cxmark{} & \cxmark{} & \cxmark{} & \cxmark{} & \cxmark{} & \cxmark{} & \cxmark{} & \cxmark{} & \cxmark & \ccmark \\
    \hline
    \textbf{ct\_sort} & \ccmark{} & \cxmark{} & \ccmark{} & \cxmark{} & \ccmark{} & \ccmark{} & \cxmark{} & \ccmark{} & \cxmark{} & \ccmark{} & \cxmark{} & \ccmark{} \\
    \textbf{ct\_sort\_mult} & \ccmark{} & \cxmark{} & \ccmark{} & \cxmark{} & \ccmark{} & \ccmark{} & \cxmark{} & \ccmark{} & \cxmark{} & \ccmark{} & \cxmark{} & \ccmark{} \\
    \textbf{sort\_naive} (insecure) & \cxmark{} & \cxmark{} & \cxmark{} & \cxmark{} & \cxmark{} & \cxmark{} & \cxmark{} & \cxmark{} & \cxmark{} & \cxmark{} & \cxmark{} & \ccmark{} \\
    \hline
    \textbf{hacl\_utility} (\(\times 11\)) & \ccmark{} & \ccmark{} &
    \ccmark{} & \ccmark{} & \ccmark{} & \ccmark{} &
    \ccmark{} & \ccmark{} & \ccmark{} & \ccmark{} & \ccmark{} & \ccmark{} \\
    \textbf{openssl\_utility} (\(\times 13\)) & \ccmark{} & \ccmark{}
    & \ccmark{} & \ccmark{} & \ccmark{} & \ccmark{} &
    \ccmark{} & \ccmark{} & \ccmark{} & \ccmark{} & \ccmark{} & \ccmark{} \\
    \hline \textbf{tea\_encrypt} & \ccmark{} & \ccmark{} & \ccmark{} &
    \ccmark{} & \ccmark{} & \ccmark{} &
    \ccmark{} & \ccmark{} & \ccmark{} & \ccmark{} & \ccmark{} & \ccmark{} \\
    \textbf{tea\_decrypt} & \ccmark{} & \ccmark{} & \ccmark{} &
    \ccmark{} & \ccmark{} & \ccmark{} &
    \ccmark{} & \ccmark{} & \ccmark{} & \ccmark{} & \ccmark{} & \ccmark{} \\
    \hline
  \end{tabularx}
  \caption{Constant-time analysis of several functions compiled with
    \texttt{gcc} or \texttt{clang} (cl) and optimization levels
    \texttt{O0} or \texttt{03}. \ccmark{} indicate that the program is
    secure and \cxmark{} that it is insecure. \textbf{Bold programs}
    and \textbf{compilers} are extensions of~\cite{simon_what_2018}
    and \circled{\cxmark} indicates a different result
    than~\cite{simon_what_2018}.}\label{tab:verif_compilers}
\end{table}

We \emph{confirm} the main conclusion of Simon \emph{et
  al.}~\cite{simon_what_2018} that \texttt{clang} is more likely to
optimize away CT protections as the optimization level increases.
Yet, \emph{contrary to their work}, our experiments show that newer
versions of \texttt{clang} are not necessarily more likely than older
ones to break CT (e.g. \texttt{ct\_sort} is compiled to a
non-constant-time code with \texttt{clang-3.9} but not with
\texttt{clang-7.1}).

Surprisingly, in contrast with \texttt{clang}, \texttt{gcc}
optimizations tend to remove branches and thus, are less likely to
introduce vulnerabilities in constant-time code. Especially,
\texttt{gcc} \emph{for ARM produces secure binaries from the insecure
  source codes}~\footnote{The compiler takes advantage of the many ARM
  conditional instructions to remove conditional jumps.}
\texttt{sort\_naive} and \texttt{select\_naive}.

Although~\cite{simon_what_2018} reports that the
\texttt{ct\_select\_v1} function is secure in all their settings, we
find the opposite. Manual inspection confirms that \texttt{clang} with
\texttt{-O3} introduces a secret-dependent conditional jump violating
constant-time.

Finally, as previously discussed, we found that the \texttt{ct\_sort}
and \texttt{ct\_sort\_mult} functions, taken from the benchmark of the
\texttt{ct-verif}~\cite{almeida_verifying_2016} tool, can be compiled
to insecure binaries.  Those vulnerabilities are out of reach of
\texttt{ct-verif} because it targets LLVM code compiled with
\texttt{clang}, while the vulnerabilities are either introduced by
\texttt{gcc} or by \emph{backend passes} of \texttt{clang} --
we did confirm that \texttt{ct-verif} with the setting
\texttt{--clang-options="-O3 -m32 -march=i386"} does not report the
vulnerability.

\myparagraph{Conclusion (RQ1, RQ2).}  We perform an extensive analysis
over 338 samples of representative cryptographic primitive studied in
the literature
\cite{blazy_verifying_2017,zinzindohoue_hacl*:_2017,almeida_verifying_2016},
compiled with different versions and options of \texttt{clang} and
\texttt{gcc}, over x86 and ARM.  Overall, it demonstrates that
\brelse{} \modif{does scale}{does scale} to realistic applications for
both bug-finding and bounded-verification (RQ1), and that the
technology is generic (RQ2).
We also get the following interesting side results: 

\begin{itemize}
\item We proved CT-secure 296 binaries of interest; %
\item We found \modif{vulnerabilities in 3 programs}{3 new
    vulnerabilities} that slipped through previous analysis -- manual
  on binary code~\cite{simon_what_2018} or automated on
  LLVM~\cite{almeida_verifying_2016};
\item We significantly extend and automate a previous study on effects
  of compilers on CT~\cite{simon_what_2018}\modif{, going from 192
    configurations to 408}{};
\item We found that \texttt{gcc} optimizations tend to help enforcing
  CT -- on ARM, \texttt{gcc} even sometimes \modif{manages to
    produce}{produces} secure \modif{codes}{binaries} from insecure sources. 
\end{itemize}

\subsection{Comparisons  (RQ3,RQ4,RQ5)}
\label{sec:scalability}
We compare \brelse{} with standard approaches based on
self-composition (\textit{SC}) and relational symbolic execution
(\textit{RelSE}) (RQ3), then we analyze the performances of our
different simplifications (RQ4), and finally we investigate the
overhead of \brelse{} compared to standard SE, and whether our
simplifications are useful for SE (RQ5). %

Experiments are performed on the programs introduced in
\cref{sec:effectiveness} for bug-finding and
bounded-verification %
(338 samples, 70k instructions).
We report the following metrics: total number of unrolled instruction
\#I, number of instruction explored per seconds (\#I/s), total number
of queries sent to the solver (\#Q), number of exploration (resp.\
insecurity) queries (\(\text{\#Q}_{\text{e}}\)), (resp.\
\(\text{\#Q}_{\text{i}}\)), total execution time (T), timeouts
(\hourglass), programs proven secure (\ccmark), programs proven
insecure (\cxmark), unknown status (\csmark). Timeout is set to 3600
seconds.

\myparagraph{Comparison vs. Standard Approaches (RQ3).}
We evaluate \brelse{} against \textit{SC} and
\textit{RelSE}. %
Since no implementation of these methods fit our particular use-cases,
we implement them directly in \binsec{}. \textit{RelSE} is obtained by
disabling \brelse{} optimizations (\cref{sec:optims}), while
\textit{SC} is implemented on top of \textit{RelSE} by duplicating low
inputs instead of sharing them and adding the adequate preconditions.
Results are given in \cref{tab:scale_total_summary}.

\begin{table}[!htbp]
  \centering
  \setlength{\tabcolsep}{3pt}
\begin{tabularx}{\linewidth}{@{}|X|r|r|r|r|r|r|r|r|r|r|@{}}
  \cline{2-11}%
  \multicolumn{1}{c|}{} &
  \multicolumn{1}{c|}{\(\text{\#I}\)} &
  \multicolumn{1}{c|}{\#I/s} &
  \multicolumn{1}{c|}{\#Q} &
  \multicolumn{1}{c|}{\begin{tabular}{@{}c@{}}\(\text{\#Q}_{\text{e}}\)\end{tabular}} &
  \multicolumn{1}{c|}{\begin{tabular}{@{}c@{}}\(\text{\#Q}_{\text{i}}\)\end{tabular}} &
  \multicolumn{1}{c|}{T} &                                                      
  \multicolumn{1}{c|}{\hourglass} &
  \multicolumn{1}{c|}{\ccmark} &
  \multicolumn{1}{c|}{\cxmark} &
  \multicolumn{1}{c|}{\csmark} \\
  \hline
  \textit{SC}    &   252k &   3.9 & 170k &  16k & 154k & 65473 &  15 &  282 & 41 &  15 \\
  \textit{RelSE} &   320k &   5.4 &  97k &  19k &  78k & 59316 &  14 &  283 & 42 &  13 \\
  \hline
  \hline
  \brelse{}      &  22.8M &  3861 & 3.9k & 2.7k & 1.3k &  5895 &   0 &  296 & 42 &   0 \\
  \hline
\end{tabularx}

\caption{ \brelse{} vs.~standard approaches\label{tab:scale_total_summary}}
\end{table}

While \textit{RelSE} performs slightly better than \textit{SC}
($\times$1.38 speedup) thanks to a noticeable reduction of the number
of queries ($\approx$50\%), both techniques are not efficient enough
on binary code:
\textit{RelSE} times out in 14 cases and achieves an analysis speed of
only 5.4 instructions per second while \textit{SC} is
worse. %
\brelse{} \emph{completely outperforms both previous approaches},
especially its simplifications
drastically reduce the number of queries sent to the solver
(\(\times 60\) less insecurity queries than \textit{RelSE}):

\begin{itemize}
\item \brelse{} reports no timeout, it is \(715\) times faster than
  \textit{RelSE} and \(1000\) times faster than \textit{SC};

\item \brelse{} \modif{does perform}{performs} bounded-verification of
  large programs (e.g. \texttt{donna}, \texttt{des\_ct},
  \texttt{chacha20}, etc.)  \modif{}{that were} out of reach of
  standard methods.
\end{itemize}

\myparagraph{Performances of Simplifications
  (RQ4).}\label{sec:perfs-simpl} %
We consider on-the-fly read-over-write (\textit{FlyRow}), untainting
(\textit{Unt}) and fault-packing (\textit{fp}). Results are reported
in \cref{tab:scale_optims}:

\begin{table}[!tbp]
  \centering
  \setlength{\tabcolsep}{2pt}
\begin{tabular}{@{}|l|r|r|r|r|r|r|r|r|r|r|@{}}
  \hline
  \multicolumn{1}{|c|}{Version} &
  \multicolumn{1}{c|}{\(\text{\#I}\)} &
  \multicolumn{1}{c|}{\#I/s} &
  \multicolumn{1}{c|}{\#Q} &
  \multicolumn{1}{c|}{\begin{tabular}{@{}c@{}}\(\text{\#Q}_{\text{e}}\)\end{tabular}} &
  \multicolumn{1}{c|}{\begin{tabular}{@{}c@{}}\(\text{\#Q}_{\text{i}}\)\end{tabular}} &
  \multicolumn{1}{c|}{T} &                                                      
  \multicolumn{1}{c|}{\hourglass} &
  \multicolumn{1}{c|}{\ccmark} &
  \multicolumn{1}{c|}{\cxmark} &
  \multicolumn{1}{c|}{\csmark} \\
  \hline
  \hline
  \multicolumn{11}{|l|}{\textbf{Standard RelSE with \textit{Unt} and \textit{fp}}} \\
  \hline
  \textit{RelSE} &    320k &   5.4 &  96919 &  19058 & 77861 & 59316 & 14 & 283 & 42 & 13 \\
  + \textit{Unt} &    373k &   8.4 &  48071 &  20929 & 27142 & 44195 &  8 & 288 & 42 &  8 \\
  + \textit{fp}  &    391k &  10.5 &  33929 &  21649 & 12280 & 37372 &  7 & 289 & 42 &  7 \\
  \hline
  \hline
  \multicolumn{11}{|l|}{\textbf{\brelse{} (\textit{RelSE} + \textit{FlyRow} + \textit{Unt} + \textit{fp})}} \\
  \hline
  \textit{RelSE+FlyRow} &  22.8M & 3075 & 4018 &   2688 &    1330 &   7402 &    0 &     296 &        42 &         0 \\
  + \textit{Unt}        &  22.8M & 3078 & 4018 &   2688 &    1330 &   7395 &    0 &     296 &        42 &         0 \\
  + \textit{fp}         &  22.8M & 3861 & 3980 &   2688 &    1292 &   5895 &    0 &     296 &        42 &         0 \\
  \bottomrule
\end{tabular}

\caption{Performances of \brelse{} simplifications.\label{tab:scale_optims}}
\end{table}

\begin{itemize}
\item \textit{FlyRow} is the major source of improvement in \brelse{},
  drastically reducing the number of queries sent to the solver and
  allowing a \(\times 569\) speedup w.r.t.~\textit{RelSE};

\item Untainting and fault-packing do have a positive impact on
  \textit{RelSE} (untainting alone reduces the number of queries by
  50\%, the two optimizations together yield a \(\times 2\) speedup);

\item Yet, their impact is \modif{much}{} more modest once
  \textit{FlyRow} is activated: untainting \modif{leads only}{leads}
  to a very slight speedup, while fault-packing still achieves a
  \(\times 1.25\) speedup.
\end{itemize}

\noindent Still, \textit{fp} can be \modif{very}{} interesting on some
particular programs, \modif{if}{when the} precision of the bug report
is not the priority.
Consider for instance the non-constant-time version of \texttt{aes} in
BearSSL (i.e.\ \texttt{aes\_big}): \brelse{} without \textit{fp}
reports 32 vulnerable instructions in 1580 seconds, while \brelse{}
with \textit{fp} reports 2 vulnerable \emph{basic blocks} (covering
the 32 vulnerable instructions) in only 146 seconds.

\myparagraph{Comparison vs. Standard SE (RQ5).}\label{sec:comp-se} %
Standard SE is directly implemented in the \textsc{Rel} module and
models a single execution of the program with exploration queries {\it
  but without insecurity queries}. %
\modif{We also consider a recent SMT simplification technique,
  \textit{PostRow}~\cite{farinier_arrays_2018}, dedicated to array
  simplifications (formula preprocessing).}{We also consider a recent
  implementation of read-over-write~\cite{farinier_arrays_2018}
  implemented as a formula pre-processing, posterior to SE
  (\textit{PostRow}).}  %
Results are presented in \cref{tab:sse_relse_overhead2}.

\begin{table}[!htbp]
\setlength{\tabcolsep}{6pt}
  \centering
\begin{tabularx}{.9\linewidth}{@{}|Z|r|r|r|r|r|r|r|r|}
  \cline{2-6}%
  \multicolumn{1}{c|}{} &
  \multicolumn{1}{c|}{\#I} &
  \multicolumn{1}{c|}{\#I/s} &
  \multicolumn{1}{c|}{\#Q} &
  \multicolumn{1}{c|}{T} &
  \multicolumn{1}{c|}{\hourglass}\\
  \hline%
  \textit{SE}           &   440k &  15.1 &  23453 &  29122 &   7\\
  \textit{SE+PostRow}\cite{farinier_arrays_2018}   &   509k &  18.5 & 27252 &  27587 &   7\\
  \textit{SE+FlyRow}    &  22.8M &  6804 &  2688 &   3346 &   0\\
  \hline
  \textit{RelSE}        &   320k &   5.4 &  96919 &  59316 &  14\\
  \textit{RelSE+PostRow}&   254k &   4.0 &  75043 &  63693 &  16\\
  \brelse{}             &  22.8M &  3861 &   3980 &   5895 &   0\\
  \hline                
\end{tabularx}
\caption{Performances of relational symbolic execution compared to
  standard symbolic execution with/without binary level
  simplifications.\label{tab:sse_relse_overhead2}}
\end{table}

\begin{itemize}
\item The overhead of \brelse{} compared to our best setting for SE
  (\textit{SE+FlyRow}), in terms of speed (\#I/s), is only
  \(\times 1.8\). Hence CT comes almost for free on top of standard
  SE. This is consistent with the fact that our simplifications
  discard most insecurity queries, letting only the exploration
  queries which are also part of SE.

\item \textit{FlyRow} completely outperforms \textit{PostRow}. %
  First, \textit{PostRow} is not designed for relational verification
  and must reason about pairs of memory. Second, \textit{PostRow}
  simplifications are not propagated along the execution and must be
  recomputed for every query, producing a significant
  simplification-time overhead. On the contrary, \textit{FlyRow}
  models a single memory containing relational values and propagates
  along the symbolic execution.

\item \textit{FlyRow} also improves the performance of standard SE by
  a factor \(450\), performing much better than
  \modif{state-of-the-art binary-level simplifications applied as a
    post-processing of the formula}{\textit{PostRow}} in our
  experiments.
\end{itemize}

\myparagraph{Conclusion (RQ3, RQ4, RQ5).}  \brelse{} performs
significantly better than previous approaches to relational symbolic
execution (\(\times 715\) speedup vs.~\textit{RelSE}). The very main
source of improvement is the \textit{FlyRow} on-the-fly simplification
(\(\times 569\) speedup vs.~\textit{RelSE}, $\times 60$ less
insecurity queries).
Note that, in our context, \textit{FlyRow} completely outperforms
state-of-the-art \modif{SMT formula simplifications for low-level
  constraints}{binary-level simplifications}, as they are not designed
to efficiently cope with relational properties and introduce a
significant simplification-overhead at every query.
Fault-packing and untainting, while effective over \textit{RelSE},
have a much slighter impact once \textit{FlyRow} is activated;
\modif{they can still be useful in certain situations or in the case
  of fault packing when report precision is not the main
  concern.}{fault-packing can still be useful when report precision is
  not the main concern.}
Finally, \modif{here}{in our experiments}, \textit{FlyRow}
significantly improves the performance of standard SE (\(\times 450\)
speedup).

\section{Discussion}\label{sec:discussion}

\myparagraph{Implementation limitations.} %
Our implementation shows three main limitations commonly found in
research prototypes: %
it does not support dynamic libraries -- executable must be statically
linked or stubs must be provided for external function calls, it does
not implement predefined syscall stubs, and it does not support
floating point instructions. These problems are orthogonal to the core
contribution of this paper and the two first ones are essentially
engineering tasks.
Moreover, the prototype is already efficient on real-world case
studies.

\myparagraph{Threats to validity in experimental evaluation.} %
We assessed the effectiveness of our tool on several known secure and
insecure real-world cryptographic binaries, many of them taken from
prior studies. All results have been crosschecked with the expected
output, and manually reviewed in case of deviation.

Our prototype is implemented as part of
\binsec{}~\cite{david_binsec/se:_2016}, whose efficiency and
robustness have been demonstrated in prior large scale studies on both
adversarial code and managed
code~\cite{bardin_backward-bounded_2017,recoules_ase_2019,DBLP:conf/cav/FarinierBBP18,david_specification_2016}. The
IR lifting part has been positively evaluated in external
studies~\cite{DBLP:journals/tocs/ChipounovKC12,kaist-bar-workshop} and
the symbolic engine features aggressive formula
optimizations~\cite{farinier_arrays_2018}.  All our experiments use
the same search heuristics (depth-first) and, for
bounded-verification, smarter heuristics do not change the
performances. %
Also, we tried Z3 and confirmed the better performance of Boolector.

Finally, we compare our tool to our own versions of \textit{SC} and
\textit{RelSE}, %
primarily because none of the existing tools can be easily adapted for
our setting, and also because this allows comparing very close
implementations.

\section{Related Work}\label{sec:related} %
Related work has already been lengthly discussed along the paper. We
add here only a few additional discussions, as well as an overview of
existing SE-based tools for information flow
(\cref{tab:comparison_se}) and an overview of (other) existing
automatic analyzers for CT (\cref{tab:comparison_ct}), partly taken
from~\cite{almeida_verifying_2016}.

\begin{table}[htbp!]
  \centering
  \footnotesize
  \setlength\tabcolsep{1.5pt} %
  \begin{tabularx}{\linewidth}{|X|c|c|c|c|r|r|}
    \hline
    Tool & Target & NI & Technique & P/BV/BF/C & \(\approx\!\)XP max & \(\text{I}_{u}/s\) \\
    \hline
    RelSym~\cite{farina_relational_2017} &
    imp-for & \ccmark & RelSE & \ccmark/\ccmark/\ccmark/\ccmark & 10 loc & \textsc{na} \\
    IF-exploit\cite{do_exploit_2015} &
    Java & \ccmark & SC & \ccmark/\ccmark/\ccmark/\ccmark & 20 loc & \textsc{na} \\
    Type-SC-SE\cite{milushev_noninterference_2012} &
    C & \ccmark & type-based SC & \cxmark/\cxmark/\ccmark/\ccmark & 20 loc & \textsc{na} \\
    Casym~\cite{brotzman_casym:_2019} &
    LLVM & \ccmark & SC+over-approx & \ccmark/\ccmark/\cxmark/\cxmark & 200 (C) & \textsc{na} \\ 
    \hline
    IF-low-level~\cite{balliu_automating_2014} &
    \textbf{binary} & \ccmark & SC + invariants & \ccmark/\ccmark/\cxmark/\cxmark & 250 \(\text{I}_{s}\)& \textsc{na}\\
    IF-firmware\cite{subramanyan_verifying_2016} &
    \textbf{binary} & \cxmark & SC + concretize & \cxmark/\cxmark/\ccmark/\ccmark & 500k \(\text{I}_{u}\) & 260 \\
    CacheD~\cite{wang_cached:_2017} &
    \textbf{binary} & \cxmark & concret+tainting & \cxmark/\cxmark/\ccmark/\ccmark & 31M \(\text{I}_{u}\) & 2010 \\
    \hline
    \hline
    \brelse{} &
                \textbf{binary} & \cxmark & RelSE + simpl. &
                \cxmark/\ccmark/\ccmark/\ccmark & 10M \(\text{I}_{u}\) & 3861  \\    
    \hline
  \end{tabularx}
  \caption{SE-based tools for Information Flow. NI indicates whether the
    technique handles  general non-interference (diverging paths) or
    not (CT-like properties), P: proof, BV: 
    bounded-verification, BF: bug-finding, C: counterexample. %
    \(\text{I}_{s}\): static instr.,  \(\text{I}_{u}\): unrolled instr., 
    \textsc{na}: non-applicable. }
  \label{tab:comparison_se}
\end{table}

\myparagraph{Self-compositon and SE} %
has first been used by Milushev \emph{et
  al.}~\cite{milushev_noninterference_2012}. They use type-directed
self-composition and dynamic symbolic execution to find bugs of
\emph{noninterference} but they do not address scalability and their
experiments are limited to toy examples. The main issues here are the
quadratic explosion of the search space (due to the necessity of
considering diverging paths) and the complexity of the underlying
formulas.
Later works~\cite{do_exploit_2015,balliu_automating_2014} suffer from
the same problems.

\emph{Instead of considering the general case of noninterference, we
  focus on CT, and we show that it remains tractable for SE with
  adequate optimizations.
}

\myparagraph{Relational symbolic execution.} %
\emph{Shadow Symbolic
  Execution}~\cite{cadar_shadow_2014,palikareva_shadow_2016} aims at
efficiently testing evolving softwares by focusing on the new
behaviors introduced by a patch.
The paper introduces the idea of \emph{sharing formulas} across two
executions in the same SE instance. The term \emph{relational symbolic
  execution} has been coined \modif{in a recent paper}{more
  recently}~\cite{farina_relational_2017} but this work is limited to
a simple toy imperative language and do not address scalability.

\emph{We maximize sharing between pairs of executions, as ShadowSE
  does, but we also develop specific optimizations tailored to the
  case of binary-level CT. Experiments show that our optimizations are
  crucial in this context. }

\myparagraph{Symbolic execution for CT.} %
Only three previous works in this category achieve scalability, yet at
the cost of either precision or soundness.
Wang {et al.}~\cite{wang_cached:_2017} and Subramanyan \emph{et
  al.}~\cite{subramanyan_verifying_2016} sacrifice soundness for
scalability (no bounded-verification).  The former performs symbolic
execution on fully concrete traces and only symbolize the secrets.
The latter concretizes memory accesses.
In both cases, they may miss feasible paths as well as
vulnerabilities.
Brotzman \emph{et al.}~\cite{brotzman_casym:_2019} take the opposite
side and sacrifice precision for scalability (no bug-finding). %
Their analysis scales by over-approximating loops and resetting the
symbolic state at chosen code locations.

We adopt a different approach and scale by heavy formula
optimizations, allowing us to keep both correct bug-finding (BF) and
correct bounded-verification (BV). Interestingly, our method is faster
than these approximated ones.  %
\emph{We propose the first technique for CT-verification at
  binary-level that is correct for BF and BV and scales on real world
  cryptographic examples.}
Moreover, our technique is compatible with the previous approximations
for extra-scaling.

\myparagraph{Other Methods for CT Analysis.} 
\emph{Static approaches} %
based on sound static
analysis~\cite{agat_transforming_2000,won_program_2006,barthe_system-level_2014,bacelar_almeida_formal_2013,blazy_verifying_2017,kopf_automatic_2012,doychev_cacheaudit:_2015,doychev_rigorous_2017,almeida_verifying_2016,DBLP:conf/cc/RodriguesPA16}
give formal guarantees that a program is free from time side-channels
but they cannot find bugs when a program is rejected.
Some works also propose program transformations to make a program
secure~\cite{agat_transforming_2000,won_program_2006,chattopadhyay_symbolic_2018,wu_eliminating_2018,brotzman_casym:_2019}
but they consider less capable attackers and target higher-level code.
\emph{Dynamic approaches} for constant-time are precise (they find real
violations) but limited to a subset of the 
execution traces,  hence they are not complete. These
techniques include statistical analysis~\cite{reparaz_dude_2017},
dynamic binary
instrumentation~\cite{langley_imperialviolet_2010,wichelmann_microwalk:_2018},
and dynamic symbolic execution
(DSE)~\cite{chattopadhyay_quantifying_2017}.

\begin{table}[tbp!]
  \centering
  \footnotesize
  \setlength\tabcolsep{2pt} %
  \begin{tabularx}{\linewidth}{|X|c|c|c|c|}
    \hline
    Tool & Target & Analysis & Technique & P/BV/BF/C \\
    \hline
    ct-ai~\cite{blazy_verifying_2017} &
        C & static & abstract-interpretation & \ccmark/\ccmark/\cxmark/\cxmark \\
    FlowTracker~\cite{DBLP:conf/cc/RodriguesPA16} &
        LLVM & static & type-system & \ccmark/\ccmark/\cxmark/\cxmark \\
    ct-verif~\cite{almeida_verifying_2016} &
        LLVM & static & logical, product-programs & \ccmark/\ccmark/\cxmark\(^{*}\)/ \cxmark \\   %
    Casym~\cite{brotzman_casym:_2019}\(^{\ddagger}\) &
        LLVM & static & over-approx.~SE & \ccmark/\ccmark/\cxmark/\cxmark \\   %
    VirtualCert\(^\dag\)~\cite{barthe_system-level_2014} & %
        x86 & static & type-system & \ccmark/\ccmark/\cxmark/\cxmark \\
    \hline %
    ctgrind~\cite{langley_imperialviolet_2010} & %
        \textbf{binary} & dynamic & Valgrind & \cxmark/\cxmark/\cxmark/\ccmark \\ 
    CacheAudit~\cite{doychev_rigorous_2017}\(^{\ddagger}\) &
        \textbf{binary} & static & abstract-interpretation & \ccmark/\ccmark/\cxmark/\cxmark \\
    CacheD~\cite{wang_cached:_2017}\(^{\ddagger}\) &
        \textbf{binary} & dynamic & DSE & \cxmark/\cxmark/\ccmark/\ccmark \\
    \hline
    \hline
    \brelse{} &
        \textbf{binary} & SE & RelSE + simpl. & \cxmark/\ccmark/\ccmark/\ccmark \\
    \hline
  \end{tabularx}
  \caption{Automatic analysis tools for CT-like properties
    (see~\cite{almeida_verifying_2016}). 
\(^{*}\)~ct-verif  can
    be incomplete because of invariant inference. %
    \(^\dag\)~As part of CompCert, cannot be used on arbitrary
    executables. \(^{\ddagger}\)~Also implements a cache model.}
  \label{tab:comparison_ct}
\end{table}

\section{Conclusion}\label{sec:conclusion} %
We tackle the problem of designing an automatic and efficient
binary-level analyzer for \emph{constant-time}, enabling both
bug-finding and bounded-verification on real-world cryptographic
implementations.  Our approach is based on \emph{relational symbolic
  execution} together with original \emph{dedicated optimizations}
reducing the overhead of relational reasoning and allowing for a
significant speedup.
Our prototype, \brelse{}, is shown to be highly efficient compared to
alternative approaches. We used it to perform extensive binary-level
CT analysis for a wide range of cryptographic implementations and to
automate and extend a previous study of CT preservation by compilers.
We found three vulnerabilities that slipped through previous manual
and automated analyses, and we discovered that \texttt{gcc -O0} and
backend passes of \texttt{clang} introduce violations of CT out of
reach of state-of-the-art CT verification tools at LLVM or source
level.

\newpage

\printbibliography{}

\appendix
\subsection{Full Set of rules}
\subsubsection{Concrete Evaluation}\label{app:concrete_evaluation}
The full set of rules for the concrete evaluation is reported in
\Cref{fig:dba_semantics_full}.

\begin{figure}[htbp]
\centering
\footnotesize

\fbox{\begin{minipage}{.98\linewidth}
  \begin{mathpar}

    \boxed{\textbf{Expr}} %
    \hspace{2em}

    \inferrule*[left={cst}]{ }{\ceconf{\cregmap}{\cmem}{\texttt{bv}} \eeval{} \texttt{bv}}
    \quad

    \inferrule*[left={var}]{ }{\ceconf{\cregmap}{\cmem}{\texttt{v}}
      \eeval{} \cregmap\ \texttt{v}} \and

    \inferrule*[left={unop}]{
      \ceconf{\cregmap}{\cmem}{e} \eeval{} \texttt{bv}%
    }{
      \ceconf{\cregmap}{\cmem}{\blackdiamond_{u} e} \eeval{} \blackdiamond_{u} \texttt{bv}
    }

    \inferrule*[left={binop}]{
      \ceconf{\cregmap}{\cmem}{e_1} \eeval{} \texttt{bv}_{\texttt{1}}\\
      \ceconf{\cregmap}{\cmem}{e_2} \eeval{} \texttt{bv}_{\texttt{2}}%
    }{
      \ceconf{\cregmap}{\cmem}{e_1 \blackdiamond_{b} e_2} \eeval{}
      \texttt{bv}_{\texttt{1}} \diamond_{b} \texttt{bv}_{\texttt{2}}
    }

    \inferrule*[left={load}]
    {
      \ceconf{\cregmap}{\cmem}{e} \ceeval{\leakvar} \texttt{bv} \\
    }
    {
      \ceconf{\cregmap}{\cmem}{\load{} e} %
      \ceeval{\leakvar \cdot [\texttt{bv}] } \cmem~\texttt{bv} 
    }\and
  \end{mathpar}\end{minipage}}

  \fbox{\begin{minipage}{.98\linewidth}\begin{mathpar}
     \boxed{\textbf{Instr}}

    \inferrule*[left={s\_jump}]{
      \locmap{\locvar} = \texttt{goto}\ \locvar'
    }{
      \cconf{\locvar}{\cregmap}{\cmem} \cleval{[\locvar]}
      \cconf{\locvar'}{\cregmap}{\cmem}
    }\and

    \inferrule*[lab={d\_jump}]{
      \locmap{l} = \texttt{goto}\ e \\
      \ceconf{\cregmap}{\cmem}{e} \ceeval{\leakvar} \texttt{bv} \\
      \locvar' \mydef to\_loc(\texttt{bv})
    }
    {
      \cconf{\locvar}{\cregmap}{\cmem} %
      \cleval{\leakvar \cdot [\locvar']} %
      \cconf{\locvar'}{\cregmap}{\cmem}
    }\and

    \inferrule*[lab={ite-true}]{
      \locmap{l} = \texttt{ite}\ e\ \texttt{?}\ \locvar_1 \texttt{:}\ \locvar_2\\  
      \ceconf{\cregmap}{\cmem}{e} \ceeval{\leakvar} \texttt{bv} \\
      \texttt{bv} \neq 0
    }
    {
      \cconf{\locvar}{\cregmap}{\cmem} %
      \cleval{\leakvar \cdot [\locvar{}_1]} %
      \cconf{\locvar{}_1}{\cregmap}{\cmem}
    }\and

    \inferrule*[lab={ite-false}]{
      \locmap{l} = \texttt{ite}\ e\ \texttt{?}\ \locvar_1 \texttt{:}\ \locvar_2\\  
      \ceconf{\cregmap}{\cmem}{e} \ceeval{\leakvar} \texttt{bv} \\
      \texttt{bv} = 0
    }
    {
      \cconf{\locvar}{\cregmap}{\cmem} %
      \cleval{\leakvar \cdot [\locvar{}_2]} %
      \cconf{\locvar{}_2}{\cregmap}{\cmem}
    }\and

    \inferrule*[left={assign}]{
      \locmap{l} = \texttt{v := } e\\
      \cconf{\locvar}{\cregmap}{\cmem}{e} \ceeval{t} \texttt{bv}
    }{
      \cconf{\locvar}{\cregmap}{\cmem} \cleval{t}
      \cconf{\locvar+1}{\cregmap[\texttt{v} \mapsto{} \texttt{bv}]}{\cmem}
    }\and

    \inferrule*[lab={store}]{
      \locmap{l} = \store{} e := e'\\
      \ceconf{\cregmap}{\cmem}{e} \ceeval{\leakvar} \texttt{bv} \\
      \ceconf{\cregmap}{\cmem}{e'} \ceeval{\leakvar'} \texttt{bv}' \\
    }
    { %
      \cconf{\locvar}{\cregmap}{\cmem} %
      \cleval{\leakvar' \cdot \leakvar \cdot [\texttt{bv}]} %
      \cconf{\locvar+1}{\cregmap}{\cmem[\texttt{bv} \mapsto \texttt{bv'}]} %
    }
  \end{mathpar}\end{minipage}}
\caption{Concrete evaluation of DBA instructions and expressions,
  where \(\cdot\) is the concatenation of leakages and
  \(to\_loc : \bvset{32} \to \locset\) converts a bitvector to a
  location.}
  \label{fig:dba_semantics_full}
\end{figure}

\subsubsection{Symbolic Evaluation}\label{app:symbolic_evaluation}
The full set of rules for the symbolic evaluation is reported in
\Cref{fig:eval_instr_sha_full}.

\begin{figure}[htbp]
\footnotesize

\fbox{\begin{minipage}{.98\linewidth}
  \begin{mathpar}

    \boxed{\textbf{Expr}} \hspace{2em}
    \inferrule*[left={cst}]{ }{\econfold{\regmap}{\smem}{\texttt{bv}} \eeval{} \simple{bv}}
    \quad
    \inferrule*[left={var}]{ }{\econfold{\regmap}{\smem}{\texttt{v}}
      \eeval{} \regmap\ \texttt{v}} \and

    \inferrule*[left={unop}]{
      \econfold{\regmap}{\smem}{e} \eeval{} \rel{\phi} \\
      \rel{\varphi} \mydef{} \diamond_{u} \rel{\phi}
    }{
      \econfold{\regmap}{\smem}{\blackdiamond_{u} e} \eeval{} \rel{\varphi}
    }

    \inferrule*[left={binop}]{
      \econfold{\regmap}{\smem}{e_1} \eeval{} \rel{\phi}\\
      \econfold{\regmap}{\smem}{e_2} \eeval{} \rel{\psi}\\
      \rel{\varphi} \mydef{} \rel{\phi{}} \diamond_{b} \rel{\psi}
    }{
      \econfold{\regmap}{\smem}{e_1 \blackdiamond_{b} e_2} \eeval{} \rel{\varphi}
    }

    \inferrule*[left={load}]{
      \econfold{\regmap}{\smem}{e} \eeval{} \rel{\phi}\\
     \rel{\varphi} \mydef{} \pair{select(\lproj{\smem}, \lproj{\rel{\phi}}) }{
                                    select(\rproj{\smem}, \rproj{\rel{\phi}})} \\
                                    \secleak(\rel{\phi})
    }{
      \econfold{\regmap}{\smem}{\load{} e} \eeval{} \rel{\varphi}
    }\and
  \end{mathpar}\end{minipage}}

  \fbox{\begin{minipage}{.98\linewidth}\begin{mathpar}
    \boxed{\textbf{Instr}} \hspace{5em}

    \inferrule*[left={s\_jump}]{
      \locmap{l} = \texttt{goto}\ l'
    }{
      \iconfold{l}{\regmap}{\smem}{\pc{}} \ieval{}
      \iconfold{l'}{\regmap}{\smem}{\pc{}}
    }\and

    \inferrule*[left={d\_jump}]{
      \locmap{l} = \texttt{goto}\ e\\
      \econfold{\regmap}{\smem}{e}  \eeval{} \rel{\varphi}  \\
      \pi{}' \mydef{} \pi{} \wedge{} %
      (\lproj{\rel{\varphi{}}} = \rproj{\rel{\varphi{}}})\\
      M\sat{\pi{}' }  \\ l' \mydef M(\lproj{\rel{\varphi{}}}) \\
      \secleak(\rel{\varphi})
    }{
      \iconfold{l}{\regmap}{\smem}{\pc} \ieval{}
      \iconfold{l'}{\regmap}{\smem}{%
       \pi{}'}
    }\and

    \inferrule*[lab={ite-true}]{
      \locmap{l} = \texttt{ite}\ e\ \texttt{?}\ l_{true} \texttt{:}\ l_{false}\\
      l' \mydef  l_{true} \\
      \econfold{\regmap}{\smem}{e } \eeval{} \rel{\varphi} \\
      \pc{}' \mydef{} \pc{} \wedge{} %
      (true= \lproj{\rel{\varphi{}}} = \rproj{\rel{\varphi{}}})\\
         \sat{\pc'} \\
                \secleak(\rel{\varphi})
   }{
      \iconfold{l}{\regmap}{\smem}{\pc{}} \ieval{}
      \iconfold{l'}{\regmap}{\smem}{\pc{}'}
    }\and

    \inferrule*[lab={ite-false}]{
      \locmap{l} = \texttt{ite}\ e\ \texttt{?}\ l_{true} \texttt{:}\ l_{false}\\
      l' \mydef  l_{false} \\
      \econfold{\regmap}{\smem}{e } \eeval{} \rel{\varphi} \\
      \pc{}' \mydef{} \pc{} \wedge{} %
      (false = \lproj{\rel{\varphi{}}} = \rproj{\rel{\varphi{}}})\\
         \sat{\pc'} \\
                \secleak(\rel{\varphi})
   }{
      \iconfold{l}{\regmap}{\smem}{\pc{}} \ieval{}
      \iconfold{l'}{\regmap}{\smem}{\pc{}'}
    }\and  

    \inferrule*[left={assign}]{
      \locmap{l} = v := e\\
      \econfold{\regmap}{\smem}{e} \eeval{} \rel{\varphi} \\
      \rel{\varphi}' \mydef{} \canonical(\rel{\varphi}) \\
      \regmap' \mydef{} \regmap[v \mapsto{} \rel{\varphi}'] \\
      \pc' \mydef{} \pc{} \wedge{} (\lproj{\rel{\varphi}}' = \lproj{\rel{\varphi}})
      \wedge{} (\rproj{\rel{\varphi}}' = \rproj{\rel{\varphi}})
    }{
      \iconfold{l}{\regmap}{\smem}{\pc{}} \ieval{}
      \iconfold{l+1}{\regmap'}{\smem}{\pc'}
    }\and

    \inferrule*[lab={store}]{
      \locmap{l} = \store{} e := e' \\
      l' = l+1\\
      \econfold{\regmap}{\smem}{e}  \eeval{} \rel{\varphi} \\
      \econfold{\regmap}{\smem}{e'} \eeval{} \rel{\phi} \\
    \smem' \mydef{}  \pair{ store(\lproj{\smem},\lproj{\rel{\varphi}},\lproj{\rel{\phi}})}{store(\rproj{\smem},\rproj{\rel{\varphi}},\rproj{\rel{\phi}})} \\
       \pc' \mydef{} \pc{} \wedge \lproj{\smem}' = store(\lproj{\smem},\lproj{\rel{\varphi}},\lproj{\rel{\phi}})
                          \wedge \rproj{\smem}' = store(\rproj{\smem},\rproj{\rel{\varphi}},\rproj{\rel{\phi}})\\
                                 \secleak(\rel{\varphi})
    }{
      \iconfold{l}{\regmap}{\smem}{\pc{}} \ieval{}
      \iconfold{l'}{\regmap}{\smem'}{\pc{}'}
    }
  \end{mathpar}\end{minipage}}
\caption{Symbolic evaluation of DBA instructions and expressions where
  \(\canonical(\protect\rel{\varphi})\) returns
  \(\protect\rel{\varphi}\) if it is in canonical form or a temporary
  variable otherwise; and \(\diamond_{u}\) (resp. \(\diamond_{b}\)) is
  the logical counterpart of the concrete operator
  \(\blackdiamond_{u}\) (resp. \(\blackdiamond_{b}\)).}
\label{fig:eval_instr_sha_full}
\end{figure}

\subsection{Proofs}
\noindent
We recall essential elements that we will use in the proofs.

\begin{proposition}\label{hyp:deterministic}
  Concrete semantics is deterministic, c.f.\ rules of the concrete
  semantics in \cref{fig:dba_semantics_full}.
\end{proposition}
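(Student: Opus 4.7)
The plan is to establish determinism at two levels: first for expression evaluation, then for the single-step transition relation on configurations, lifting both to multi-step executions afterwards. Concretely, I would prove the two statements
\begin{itemize}
\item for every state $\stateold{\cregmap}{\cmem}$ and expression $e$, if $\ceconf{\cregmap}{\cmem}{e} \ceeval{\leakvar_1} \mathtt{bv}_1$ and $\ceconf{\cregmap}{\cmem}{e} \ceeval{\leakvar_2} \mathtt{bv}_2$ then $\mathtt{bv}_1 = \mathtt{bv}_2$ and $\leakvar_1 = \leakvar_2$;
\item for every configuration $c$, if $c \cleval{\leakvar_1} c_1$ and $c \cleval{\leakvar_2} c_2$ then $c_1 = c_2$ and $\leakvar_1 = \leakvar_2$.
\end{itemize}

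First I would prove the expression-level claim by structural induction on $e$, using the syntax of DBA expressions in \cref{table:dba_lang}. The base cases are constants, which reduce to themselves with empty leakage, and variable lookups $\cregmap\ v$, which are deterministic because $\cregmap$ is a function. For memory reads $@e'$ the induction hypothesis gives a unique $\mathtt{bv}$ and leakage for the inner expression $e'$, and $\cmem$ is itself a function, so the result and the emitted access leakage are uniquely determined. The inductive step for arithmetic/logical/comparison operators follows because each operator is a total function on bitvectors and the sub-expression leakages are concatenated in a fixed order.

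Next, for the step-level claim, I would case-split on the current instruction $\locmap{l}$. For each syntactic form of instruction in \cref{table:dba_lang} there is exactly one applicable rule in \cref{fig:dba_semantics_full}: assignment to a register, store to memory, static jump, dynamic jump, and conditional. Within each rule, the premises consist of expression evaluations (deterministic by the first claim) and purely functional updates of $\cregmap$ or $\cmem$, so the resulting successor configuration and leakage are uniquely determined. The only case that deserves slightly more care is the conditional \textit{ite}: the rules \rulename{ite-true} and \rulename{ite-false} are guarded by complementary tests on the value of the condition expression, and by the expression-level determinism only one of the two guards can hold for a given configuration, so the two rules are mutually exclusive. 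Dynamic jumps are similarly deterministic at the concrete level because the jump target is computed as a unique bitvector value by the guard expression; the nondeterminism mentioned for the symbolic rule \rulename{d\_jump} is an artefact of the symbolic setting and does not arise here.

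Finally, multi-step determinism follows by a straightforward induction on $k$: given $c_0 \cleval{\leakvar}^k c_k$ and $c_0 \cleval{\leakvar'}^k c'_k$, the single-step determinism yields equality of the first configurations and leakages, and the induction hypothesis applied from the common successor yields equality of the remaining $k-1$ steps and of their concatenated leakages. I do not expect any real obstacle: the only minor subtlety is to verify in the figure that the rule guards on instruction shapes and on condition values partition the configuration space, which is a routine syntactic check on the rules of \cref{fig:dba_semantics_full}.
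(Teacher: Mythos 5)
Your plan is correct and matches what the paper does: the paper offers no written proof for this proposition, justifying it solely by inspection of the rules in \cref{fig:dba_semantics_full}, and your two-level induction (expressions, then single steps, then multi-step) is exactly the routine check that inspection amounts to. You also correctly handle the only two points that need care — the mutual exclusivity of \rulename{ite-true}/\rulename{ite-false} and the fact that the nondeterminism of the symbolic \rulename{d\_jump} rule is absent in the concrete semantics.
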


\begin{proposition}\label{hyp:ct_upk}
  If a program \(\prog{}\) is constant-time up to \(k\) then then for
  all \(j \leq k\), \(\prog{}\) is constant-time up to \(j\).
\end{proposition}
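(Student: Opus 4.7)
The statement is a monotonicity property: CT up to $k$ implies CT up to $j$ for every $j \le k$. The natural strategy is to reduce the $j$-step statement to the $k$-step statement by \emph{extending} any pair of $j$-step executions into a pair of $k$-step executions, applying the $k$-step hypothesis, and then restricting back to the first $j$ components of the leakage sequences.

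More concretely, I would first unfold the definition: fix $j \le k$, take two low-equivalent initial configurations $\cconfvar_0 \loweq \cconfvar_0'$, and assume $\cconfvar_0 \cleval{\leakvar}^{j} \cconfvar_j$ and $\cconfvar_0' \cleval{\leakvar'}^{j} \cconfvar_j'$. The goal is to prove $\leakvar = \leakvar'$. I would invoke the (tacit) fact that each step of the concrete semantics produces a leakage component and that the leakage of a multistep execution is defined as the concatenation of the per-step leakages (as stated in \cref{sec:concrete_semantics}). Hence an execution of length $n$ carries a leakage sequence of exactly $n$ components.

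Then I would prolong both $j$-step executions by $k-j$ further steps, obtaining $\cconfvar_0 \cleval{\leakvar \cdot \leakvar_1}^{k} \cconfvar_k$ and $\cconfvar_0' \cleval{\leakvar' \cdot \leakvar_1'}^{k} \cconfvar_k'$; by \cref{hyp:deterministic} the extensions, if they exist, are unique. Applying the hypothesis ``CT up to $k$'' to the low-equivalent pair $(\cconfvar_0,\cconfvar_0')$ yields $\leakvar \cdot \leakvar_1 = \leakvar' \cdot \leakvar_1'$. Since $\leakvar$ and $\leakvar'$ are both length-$j$ prefixes of these equal sequences, a componentwise comparison of the first $j$ positions gives $\leakvar = \leakvar'$, which is the required conclusion.

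\textbf{Main obstacle.} The only delicate point is the extension step: one needs to know that the $j$-step runs of $\cconfvar_0$ and $\cconfvar_0'$ can always be prolonged to $k$ steps. This is standard in the setting of \cref{sec:std_relse}, where the paper already assumes ``the program $\prog{}$ is defined on all locations computed during the symbolic execution'', so the concrete semantics does not get stuck on the traces under consideration; I would state this explicitly as the ambient assumption and derive the proposition as above. If the reader prefers to avoid this assumption, an equivalent argument proceeds by reverse induction on $k-j$ (base case $j=k$ is trivial; the inductive step uses CT up to $j{+}1$ and the prefix observation), reducing the extension burden to a single step at a time.
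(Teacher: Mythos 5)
The paper does not actually prove this proposition: it appears, together with determinism of the concrete semantics, among the ``essential elements'' recalled without proof at the start of the appendix, so there is no official argument to compare yours against. Your extend-then-truncate strategy is the natural way to discharge it and is essentially sound, but two points deserve care. First, the extension step is not covered by the assumption you invoke: the hypothesis that \(\prog{}\) is defined on all locations computed during the \emph{symbolic} execution (\cref{hyp:stuck}) constrains the symbolic semantics, not arbitrary concrete runs. A concrete pair of runs that terminates or blocks after \(j<k\) steps makes ``constant-time up to \(k\)'' hold vacuously for that pair while saying nothing about their \(j\)-step leakages, so the proposition genuinely needs an ambient ``no early blocking up to \(k\)'' assumption (or a reading of the definition that quantifies over all prefixes). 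Your reverse-induction fallback does not remove this obligation; it only relocates it to a single step. Second, the final prefix comparison is glossed: a single concrete step may leak a \emph{sequence} of values (e.g.\ several load addresses), so equality of the concatenated \(k\)-step leakages does not immediately yield equality of the \(j\)-step prefixes by ``comparing the first \(j\) positions'' --- the two prefixes could a priori have different lengths. One needs the observation that both runs start at the same entry point, that a fixed instruction leaks a fixed number of values, and that the leaked values determine control flow; an inner induction then aligns the per-step leakage blocks and shows they agree up to step \(j\). With these two repairs the argument goes through.
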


\begin{hypothesis}\label{hyp:stuck}
  Symbolic execution does not get stuck unless
  \($\secleak$\) evaluates to false. In particular, this implies that
  \(\prog{}\) is defined on all locations computed during the symbolic
  execution.
\end{hypothesis}

\subsubsection{Proof of Relative Completeness of RelSE
  (\Cref{thm:completeness})}\label{app:completeness}
\completeness*

\begin{proof} \emph{(Induction on k).}  Case \(k = 0\) is trivial.
  
  Let \(c_k\) and \(c_{k}'\) be concrete configurations and \(s_k\) a
  symbolic configuration for which the inductive hypothesis holds up
  to \(k-1\). We need to show that \cref{thm:completeness} still holds
  at step \(k\), meaning that if \(\prog\) is constant-time up to step
  \(k\), then for each concrete steps
  \(c_{k-1} \cleval{\leakvar} c_{k}\) and
  \(c_{k-1}' \cleval{\leakvar'} c_{k}'\) such that
  \(\leakvar = \leakvar'\), we need to show that we can perform a step
  in the symbolic execution \(s_{k-1} \ieval{} s_{k}\) and that
  \({c_{k} \concsym{l}{M} s_{k}} ~\wedge~ {c_{k}' \concsym{r}{M}
    s_{k}}\) holds.

  We can proceed case by case on the concrete evaluation of
  \(c_{k-1}\) and \(c_{k-1}'\).

  \textbf{Case \rulename{store}}: In the concrete execution, the
  instruction \(\store{} e_{idx} := e_{val}\) is evaluated and leaks
  the index \(e_{idx}\) of the store. %
  By \Cref{hyp:deterministic}, concrete semantics of the
  \rulename{store} rule, and because \(t = t'\), we have:
  \begin{equation}
    \label{eq:1}
    c_{k-1}\ e_{idx} \ceeval{} \mathtt{bv_{i}} \text{ and }
    c_{k-1}'\ e_{idx} \ceeval{} \mathtt{bv_{i}}
  \end{equation}
  
  \parsquare{} First, we show that there exists a step from
  \(s_{k-1} \mydef \iconfold{\locvar}{\regmap}{\smem}{\pc}\) in the
  symbolic execution. We have to consider the case where the
  symbolic evaluation of expressions is not stuck and the case where
  the evaluation of an expression gets stuck.

  \textbf{Case} evaluation of expressions is not stuck. Let \(\iota\)
  be the symbolic index such that
  \(\econfold{\regmap}{\smem}{e_{idx}} \eeval{} \rel{\iota}\).  %
  To apply the rule \rulename{store}, we must ensure that
  \(\secleak(\rel{\iota})\) holds. If \(\rel{\iota} = \simple{\iota}\)
  then \(\secleak\) is true, otherwise we must show
  \(\unsat \pc \wedge \lproj{\rel{\iota}} \neq
  \rproj{\rel{\iota}}\). We show that by contradiction.

  \textbf{Assume} that there exists \(M'\) such that
  \(M' \sat \pc \wedge \lproj{\rel{\iota}} \neq \rproj{\rel{\iota}}\)
  and let \(M'(\lproj{\rel{\iota}}) = \mathtt{bv_i}\) and
  \(M'(\rproj{\rel{\iota}}) = \mathtt{bv_i'}\). Notice that
  \(\mathtt{bv_i} \neq \mathtt{bv_i'}\).

  Let \(d_0, d_0', d_{k-1}, d_{k-1}'\) be concrete configurations such
  that \(d_0 \concsym{l}{M'} s_0\), \(d_0' \concsym{r}{M'} s_0\),
  \(d_{k-1} \concsym{l}{M'} s_{k-1}\), and
  \(d_{k-1}' \concsym{r}{M'} s_{k-1}\). %
  From \cref{thm:correctness}, there are concrete executions
  \(d_0 \cleval{\leakvar}^{k-1} d_{k-1}\) and
  \(d_0' \cleval{\leakvar'}^{k-1} d_{k-1}'\). Because
  \(d_k \concsym{l}{M'} s_{k-1}\) and
  \(d_{k-1}' \concsym{r}{M'} s_{k-1}\), we know that \(d_{k-1}\) and
  \(d_{k-1}'\) also evaluate the instruction \(\store{} e_{idx} := e_{val}\), and
  that
  \[d_{k-1}\ e_{idx} \ceeval{} M'(\lproj{\rel{\iota}}) \text{ and } %
    d_{k-1}'\ e_{idx} \ceeval{} M'(\rproj{\rel{\iota}})\] %
  Therefore, we have
  \(d_0 \cleval{\leakvar \cdot [\mathtt{bv_i}]}^{k} d_{k}\) and
  \(d_0' \cleval{\leakvar' \cdot [\mathtt{bv_i'}]}^{k} d_{k}'\) with
  \(\mathtt{bv} \neq \mathtt{bv_i'}\), meaning that \(\prog\) is not
  constant-time at step \(k\). This contradicts the hypothesis that
  \(\prog\) is constant-time up to \(k\), hence
  \(\unsat \pc \wedge \lproj{\rel{\iota}} \neq \rproj{\rel{\iota}}\)
  and \(\secleak\) holds. Therefore, there exists a step
  \(s_{k-1} \ieval{} s_{k}\) in the symbolic execution.

  \textbf{Case} \(\econfold{\regmap}{\smem}{e_{idx}}\) is stuck. From
  \Cref{hyp:stuck}, \(\secleak{}\) evaluates to false in the
  evaluation of the expression. With the same reasoning as in the
  previous case, this implies that \(\prog{}\) is not constant-time at
  step \(k\), which contradicts the hypothesis of
  \cref{thm:completeness}. Hence the symbolic evaluation of \(e_{idx}\)
  cannot be stuck.

  \textbf{Case} \(\econfold{\regmap}{\smem}{e_{val}}\) is stuck is
  analogous.
  
  \parsquare{} Now, we need to show that there is a model \(M'\) such
  that
  \(c_{k} \concsym{l}{M'} s_{k} ~\wedge~ c_{k}' \concsym{r}{M'}
  s_{k}\) holds.

  Recall that the program location in \(s_{k-1}\) evaluates to a store
  instruction \(\store{} e_{idx} := e_{val}\) and let \(\rel{\iota}\)
  be the symbolic index and \(\rel{\nu}\) be the symbolic value,
  meaning that
  \(\econfold{\regmap}{\smem}{e_{idx}} \eeval{} \rel{\iota} \) and
  \(\econfold{\regmap}{\smem}{e_{val}} \eeval{} \rel{\nu}\). Let
  \(s_k \mydef \iconfold{\locvar_k}{\regmap_k}{\smem_k}{\pc_k}\).

  First we need to show that the location in configurations \(c_k\),
  \(c_k'\) and \(s_k\) are identical. Because concrete and symbolic
  \rulename{store} rules increment the program location by 1 and
  because the program locations are identical in \(c_{k-1}\),
  \(c_{k-1}'\) and \(s_{k-1}\) (from induction hypothesis), the
  program locations are still identical in \(c_k\), \(c_k'\) and
  \(s_k\).

  Second, we have to show that there exists \(M'\) such that
  \(M' \sat \pc_k\) and that for all expression \(e\), either the
  symbolic evaluation gets stuck on \(e\), or
  \(\econfold{\regmap_{k}}{\smem_{k}}{e} \eeval{} \rel{\varphi}\) and
  \begin{equation}\label{eq:concsym_holds}
    \begin{aligned}
      M'(\lproj{\rel{\varphi}}) = \mathtt{bv} \iff c_k~e \ceeval{}
      \mathtt{bv} ~\wedge\\ %
      M'(\rproj{\rel{\varphi}}) = \mathtt{bv} \iff c_k'~e \ceeval{}
      \mathtt{bv}
    \end{aligned}
  \end{equation}

  \noindent
  We can build the new model \(M'\) as
  \begin{multline*}
    M' \mydef M[\smem_k \mapsto \pair{m_l}{m_r}]\\
    \begin{aligned}
      \text{where } &m_l \mydef M(\lproj{\smem})[M(\lproj{\rel{\iota}}) \mapsto M(\lproj{\rel{\nu}})]\\
      \text{and }   &m_r \mydef M(\rproj{\smem})[M(\rproj{\rel{\iota}}) \mapsto M(\rproj{\rel{\nu}})]      
    \end{aligned}
  \end{multline*}
  Intuitively, \(M'\) is equal to the old model \(M\) in which we add
  the new symbolic memory \(\smem_k\), mapping to the concrete value
  of the old memory \(M(\smem)\) where the index \(M(\rel{\iota})\)
  maps to the value \(M(\rel{\nu})\). Notice that \(M' \sat \pc_k\).

  \textbf{Case} of the left projection (right case is analogous).  We
  can prove by induction on the structure of expressions that for any
  expression, if \(s_k\) does not get stuck then
  \cref{eq:concsym_holds} holds for the model \(M'\). %
  Note that only the memory is updated from step \(k-1\) to step
  \(k\), meaning that \(c_k\), \(s_k\), and \(M'\) only differ from
  \(c_{k-1}\), \(s_{k-1}\) and \(M\) on expressions involving the
  memory. Thus, we only need to consider the rule \rulename{load}, as
  the proof for other rules directly follows from the induction
  hypothesis and the definition of \(M'\).
  
    Assume an expression \(\load e\) such that \(s_k\) does not get
  stuck and let
  \(\econfold{\regmap_{k}}{\smem_{k}}{e} \eeval{} \rel{\iota'}\) end
  \(\econfold{\regmap_{k}}{\smem_{k}}{\load e} \eeval{} \rel{\nu'}\).
  We show that if \cref{eq:concsym_holds} holds for the expression
  \(e\), then it holds for the expression \(\load e\). Formally, we
  must show that if %
  \({M'(\lproj{\rel{\iota'}}) = \mathtt{bv_{idx}}} \iff {c_k~{e}
    \ceeval{} \mathtt{bv_{idx}}}\) %
  then %
  \({M'(\lproj{\rel{\nu'}}) = \mathtt{bv_{val}}} \iff {c_k~{\load e}
    \ceeval{} \mathtt{bv_{val}}}\).

  \noindent
  First, we can simplify \(M'(\lproj{\rel{\nu'}})\) as
  \begin{align*}
    M'(\lproj{\rel{\nu'}})
    &= M'(select( \lproj{\smem_k},\lproj{\rel{\iota'}})) \text{ by symbolic rule
      \rulename{load}}\\
    &= M(\lproj{\smem})[M(\lproj{\rel{\iota}}) \mapsto M(\lproj{\rel{\nu}})][M' (\lproj{\rel{\iota'}})] \text{ by def.\ of \(M'\)}\\
  \end{align*}

  \noindent
  From this point, there are two cases:
  \begin{itemize}
  \item[-] The address of the load is the same as the address of the
    previous store:
    \(M' (\lproj{\rel{\iota'}}) = M(\lproj{\rel{\iota}})\),
    therefore \({M'(\lproj{\rel{\nu'}}) = M(\lproj{\rel{\nu}})}\).

    From the induction hypothesis, the concrete index of the load
    evaluates to \(M'(\lproj{\rel{\iota'}})\), i.e.
    \(c_k~{e} \ceeval{} M'( \lproj{\rel{\iota'}})\) which can be
    rewritten as \(c_k~{e} \ceeval{} M(\lproj{\rel{\iota}})\).
    From concrete rule \rulename{store} and
    \(c_{k-1} \concsym{l}{M} s_{k-1}\), we know that the concrete
    memory from \(c_{k-1}\) to \(c_k\) is updated at index
    \(M(\lproj{\rel{\iota}})\) to map to the value
    \(M(\lproj{\rel{\nu}})\), which leads to
    \(c_k \load e \ceeval{} M(\lproj{\rel{\nu}})\) and, by rewriting,
    to \({c_k~{\load e} \ceeval{} M'(\lproj{\rel{\nu'}})}\).
    Therefore we have shown that
    \({M'(\lproj{\rel{\nu'}}) = \mathtt{bv_{val}}} \iff {c_k~{\load e}
      \ceeval{} \mathtt{bv_{val}}}\)

  \item[-] The address of the load is different from the address of
    the previous store:
    \(M' (\lproj{\rel{\iota'}}) \neq M(\lproj{\rel{\iota}})\), therefore
    \({M' (\lproj{\rel{\nu'}}) = M(\lproj{\smem})[M' (\lproj{\rel{\iota'}})]}\).

    From the induction hypothesis, the concrete index of the load
    evaluates to \(M'(\lproj{\rel{\iota'}})\), i.e.
    \(c_k~{e} \ceeval{} M'(\lproj{\rel{\iota'}})\).
    From concrete rule \rulename{store}, we know that the concrete
    memory from \(c_{k-1}\) to \(c_k\) is only updated at address
    \(M(\lproj{\rel{\iota}})\) and untouched at address
    \(M'(\lproj{\rel{\iota'}})\). %
    Plus, we know from \(c_{k-1} \concsym{l}{M} s_{k-1}\) that address
    \(M'(\lproj{\rel{\iota'}})\) maps to
    \(M(\lproj{\smem})[M'(\lproj{\rel{\iota'}})]\) in \(c_{k-1}\).
    Therefore, in configuration \(c_k\), index
    \(M'(\lproj{\rel{\iota'}})\) maps to
    \(M(\lproj{\smem})[M'(\lproj{\rel{\iota'}})]\) which, by rewriting,
    leads to \({c_k~{\load e} \ceeval{} M'(\lproj{\rel{\nu'}})}\).
    Therefore we have shown that
    \({M'(\lproj{\rel{\nu'}}) = \mathtt{bv_{val}}} \iff {c_k~{\load e}
      \ceeval{} \mathtt{bv_{val}}}\).
  \end{itemize}

  \textbf{Other cases}: The reasoning is analogous. For the
  non-deterministic rules \rulename{ite-true}, \rulename{ite-false},
  and \rulename{d-jump}, because the leakages \(t\) and \(t'\)
  determine the control flow of the program, there exist a
  \emph{unique symbolic rule} that can be applied to match the
  execution of both \(c_{k-1}\) and \(c_{k-1}'\).
\end{proof}

\subsubsection{Proof of Correctness of RelSE
  (\Cref{thm:correctness})}\label{app:correctness}
\correctness*

\begin{proof} \emph{(Induction on k).}  Case \(k = 0\) is trivial. %
  
  Let us consider a symbolic configuration
  \(s_{k-1} \mydef \iconfold{\locvar}{\regmap}{\smem}{\pc}\) for which
  the induction hypothesis holds. Formally, for each model \(M\) and
  configurations \(c_0\), \(c_{k-1}\) such that
  \(c_0 \concsym{p}{M} s_0\) and \(c_{k-1}\concsym{p}{M} s_{k-1}\),
  we have \(c_0 \cleval{}^{k-1} c_{k-1}\).
  We need to show for each symbolic step \(s_{k-1} \ieval{} s_{k}\),
  that for each model \(M'\) and configurations \(c_0\) and \(c_k\)
  such that \(c_0 \concsym{p}{M'} s_0\) and
  \(c_{k}\concsym{p}{M'} s_{k}\), we have
  \(c_{0} \cleval{}^{k} c_k\).

Let \(\pc'\) be the new path predicate in configuration \(s_k\).  Note
that because \(\pc\) is a sub-formula of \(\pc'\), we also have
\(M' \sat \pc\). Therefore, there exists \(c_{k-1}\) such that
\(c_{k-1}\concsym{p}{M'} s_{k-1}\), and because \(\concsym{p}{M'}\) is
a tight relation, \(c_{k-1}\) is unique.  Additionally, from the
induction hypothesis for all concrete configuration \(c_0\) such that
\(c_0 \concsym{p}{M'} s_0\), we have \(c_0 \cleval{}^{k-1} c_{k-1}\).

Finally, because the symbolic execution is updated \emph{without
  over-approximation}, we also have \(c_{k-1} \cleval{} c_k\).
Therefore, for each model \(M'\) and configuration \(c_0\) and \(c_k\)
such that \(c_0 \concsym{p}{M'} s_0\) and
\(c_{k} \concsym{p}{M'} s_{k}\), we have \(c_{0} \cleval{}^{k} c_k\).
\end{proof}

\subsubsection{Proof of CT Security
  (\Cref{thm:bv})}\label{app:ct-security}
\security*

\begin{proof} \emph{(Induction on k).} Let $s_0$ be an initial
  symbolic configuration for which the symbolic evaluation never gets
  stuck. Let us consider a model \(M\) and concrete configurations
  \(\cconfvar_0 \concsym{l}{M} s_0\),
  \(\cconfvar'_0 \concsym{r}{M} s_0\), for which the induction
  hypothesis holds at step \(k\), meaning that for all
  \(c_{k} \mydef \cconf{\locvar}{\cregmap}{\cmem}\) and
  \(c_{k'} \mydef \cconf{\locvar'}{\cregmap'}{\cmem'}\) such that
  $c_0 \cleval{\leakvar}^k c_k$, $c_0' \cleval{\leakvar'}^k c_k'$,
  then \(\leakvar = \leakvar'\). We show that \cref{thm:bv} still
  holds at step \(k+1\).

  From \cref{thm:completeness}, there exists
  \(s_k \mydef \iconfold{\locvar_s}{\regmap}{\smem}{\pc}\) such that:
  \begin{equation}
    \label{eq:2}
    s_0 \ieval{}^k s_k ~\wedge~ c_k \concsym{l}{M} s_k ~\wedge~ c_k' \concsym{r}{M} s_k
  \end{equation}
  Note that from \cref{eq:2} and \cref{def:concsym}, we have
  \(l_s = l = l'\), therefore the same instructions and expression are
  evaluated in configurations \(c_k, c_k', \text{ and } s_k\).

  Because the symbolic execution does not get stuck, there exists
  \(s_{k+1}\) such that $s_k \ieval{} s_{k+1}$. We show by
  contradiction that the leakages \(\mathtt{bv}\) and \(\mathtt{bv'}\)
  produced by \(c_{k} \cleval{\mathtt{bv}} c_{k+1}\) and
  \(c_{k}' \cleval{\mathtt{bv'}} c_{k+1}'\) are necessarily equal.

  Suppose that \(c_k\) and \(c_k'\) produce distinct leakages. %
  This can happen during the evaluation of a rule \rulename{load},
  \rulename{d\_jump}, \rulename{ite}, \rulename{store}.  
  
  \textbf{Case \rulename{load}}: The evaluation of the expression
  \(\load e\) in configurations \(c_k\) and \(c_k'\) produces
  leakages \(\mathtt{bv}\) and \(\mathtt{bv'}\) and, assuming the load
  is insecure, we have \(\mathtt{bv} \neq \mathtt{bv'}\).

  Let \(\rel{\varphi}\) be the evaluation of the leakage in the
  symbolic configuration:
  \(\econf{\regmap}{\smem}{e} \eeval{} \rel{\varphi}\). %
  From \cref{eq:2}, \cref{def:concsym} and because symbolic execution
  does not get stuck, there exists \(M\) s.t. \(M \sat \pc\),
  \(\mathtt{bv} = M(\lproj{\rel{\varphi}})\) and
  \(\mathtt{bv'} = M(\rproj{\rel{\varphi}})\).

  Because we assumed \(\mathtt{bv} \neq \mathtt{bv'}\), then
  \(M(\lproj{\rel{\varphi}}) \neq M(\rproj{\rel{\varphi}})\).
  Therefore, we have
  \(M \sat \pc \wedge \lproj{\rel{\varphi}} \neq
  \rproj{\rel{\varphi}}\), meaning that \(\secleak\) evaluates to
  false and the symbolic execution is stuck.
  However, because \(s_k\) is non-blocking we have a contradiction.
  Therefore \(\mathtt{bv} = \mathtt{bv'}\).
  
  \textbf{Cases \rulename{d\_jump}, \rulename{ite}, \rulename{store}}:
  The reasoning is analogous.

  \smallskip
  \noindent We have shown that the hypothesis holds for \(k+1\). If
  $s_0 \ieval{}^{k+1} s_{k+1}$, then for all model \(M\) and
  low-equivalent initial configurations
  \(\cconfvar_0 \concsym{l}{M} s_0 \) and
  \(\cconfvar'_0 \concsym{r}{M} s_0\) such that %
  \(\cconfvar_0 \cleval{\leakvar}^k \cconfvar_k \cleval{\mathtt{bv}}
  \cconfvar_{k+1}\) %
  and %
  \(\cconfvar_0' \cleval{\leakvar'}^k \cconfvar'_k
  \cleval{\mathtt{bv'}} \cconfvar'_{k+1}\) where \(t = t'\), %
  then
  \(\leakvar \cdot [\mathtt{bv}] = \leakvar' \cdot [\mathtt{bv'}]\).
\end{proof}

\subsubsection{Proof of Bug-Finding for CT
  (\Cref{thm:bf})}\label{app:ct-bf}
\bugfinding*

\begin{proof} Let us consider symbolic configurations \(s_0\) and
  \(s_k\) such that \(s_0 \ieval{}^k s_k\) and \(s_k\) is stuck. This
  can happen during the evaluation of a rule \rulename{load},
  \rulename{d\_jump}, \rulename{ite}, \rulename{store}.

  \textbf{Case \rulename{load}}: where an expression
  \(\load e\) in the configuration
  \(s_k \mydef \iconfold{\locvar_s}{\regmap}{\smem}{\pc}\) produces a
  leakage \(\rel{\varphi}\) st.
  \(\econf{\regmap}{\smem}{e} \eeval{} \rel{\varphi}\).
  
  This evaluation blocks iff \(\neg \secleak(\rel{\varphi})\), meaning
  that there exists a model \(M\) such that
  \begin{equation}
    \label{eq:3}
    M \sat \pc \wedge (\lproj{\rel{\varphi}} \neq
    \rproj{\rel{\varphi}})
  \end{equation}

  Let us consider the concrete configurations \(c_0\), \(c_0'\),
  \({c_{k} \mydef \cconf{\locvar}{\cregmap}{\cmem}}\), and
  \({c_{k'} \mydef \cconf{\locvar'}{\cregmap'}{\cmem'}}\) such that:
  \[c_0 \concsym{l}{M} s_0, c_0' \concsym{r}{M} s_0, c_k
    \concsym{l}{M} s_k \text{ and } c_k' \concsym{r}{M} s_k\] %
  It follows by \cref{thm:correctness}, that
  \(c_0 \cleval{}^k c_k \text{ and } c_0' \cleval{}^k c_k'\).

  From \cref{def:concsym}, because
  \(c_0 \concsym{l}{M} s_0 \text{ and } c_0' \concsym{r}{M} s_0\) we
  have \(c_0 \loweq c_0'\) and because
  \(c_k \concsym{l}{M} s_k \text{ and } c_k' \concsym{r}{M} s_k\), we
  have \(l_s = l = l'\).

  Therefore the evaluation of \(c_k\) and \(c_k'\) also contains the
  expression \(\load e\), producing leakages \(\mathtt{bv}\) and
  \(\mathtt{bv'}\) st.  \(c_k~\load e \ceeval{} \mathtt{bv}\)
  and \(c_k'~\load e \ceeval{} \mathtt{bv'}\).

  From \cref{def:concsym} we have
  \(\mathtt{bv} = M(\lproj{\rel{\varphi}})\) and
  \(\mathtt{bv'} = M(\rproj{\rel{\varphi}})\), and from \cref{eq:3},
  we can deduce \(\mathtt{bv}\neq\mathtt{bv'}\).
  
  \smallskip
  \noindent 
  Therefore, we have a model \(M\) and concrete configurations
  \(\cconfvar_0 \concsym{l}{M} s_0\), %
  \(\cconfvar_0' \concsym{r}{M} s_0 \), %
  \(\cconfvar_k \concsym{l}{M} s_k \) and %
  \(\cconfvar_k' \concsym{r}{M} s_k\) such that,
  \begin{multline*}
    \cconfvar_0 \loweq \cconfvar_0'\ ~\wedge~ %
    \cconfvar_0 \cleval{\leakvar}^k \cconfvar_k \cleval{\mathtt{bv}} \cconfvar_{k+1} ~\wedge~ %
    \cconfvar_0' \cleval{\leakvar}^k \cconfvar'_k \cleval{\mathtt{bv'}} \cconfvar'_{k+1} \\ %
    \wedge \leakvar \cdot [\mathtt{bv}] \neq \leakvar' \cdot [\mathtt{bv'}]
  \end{multline*}
  which shows that the program is insecure.
  
  \textbf{Cases \rulename{d\_jump}, \rulename{ite}, \rulename{store}}:
  The reasoning is analogous.
\end{proof}

\subsection{Usability: Stubs for Input Specification}\label{app:stubs}
We enable the specification of high and low variables in C source code
using dummy functions that are stubbed in the symbolic execution
(cf.~\cref{ex:stub}).  Note that this is at the cost of portability
(it can only be used around C static libraries or when in possession
of the C source code). If portability is required, the user can still
refer to the binary-level specification method (\cref{sec:std_relse})
which relies on manual reverse-engineering to find the offsets of
secrets relatively to the initial \texttt{esp}.

A call to a function \codeinline{high_input_$n$($addr$)}, where \(n\)
is a constant value, specifies that the memory must be initialized with
\(n\) secret bytes, starting at address \(addr\).

\begin{example}[Stub for specifying high locations]\label{ex:stub}
  A user can write a wrapper around a function \texttt{foo} to mark
  its arguments as low or high as shown in~\Cref{list:wrapper}. The
  function \texttt{foo} can be defined in an external library which
  must be statically linked with the wrapper program.
  \begin{center}
  \begin{minipage}{.9\linewidth}
\begin{ccode}[caption={Wrapper around external function \texttt{foo}},label={list:wrapper}] %
uint8_t secret[4]; uint8_t public[4];
high_input_4(secret); //4 bytes high input
low_input_4(public);  //4 bytes low input
return foo(secret, public);
\end{ccode}
\end{minipage}
\end{center}
During the symbolic execution, the function %
  \codeinline{high_input_4(secret)} is encountered, it is stubbed as:
  \begin{align*}
    \text{\codeinline{@[secret+0]}} := \pair{\beta_{0}}{\beta'_{0}}\quad & \quad
    \text{\codeinline{@[secret+1]}} := \pair{\beta_{1}}{\beta'_{1}}\\
    \text{\codeinline{@[secret+2]}} := \pair{\beta_{2}}{\beta'_{2}}\quad & \quad
    \text{\codeinline{@[secret+3]}} := \pair{\beta_{3}}{\beta'_{3}}
  \end{align*}
  where \(\beta_i, \beta_i'\) are fresh 8-bit bitvector variables.
\end{example}

\subsection{Zoom on the Lucky13 Attack}\label{app:lucky13}
Lucky 13~\cite{al_fardan_lucky_2013} is a famous attack exploiting
timing variations in TLS CBC-mode decryption to build a Vaudenay’s
padding oracle attack and enable plaintext
recovery~\cite{al_fardan_lucky_2013,ronen_pseudo_2018}. We do not
actually mount an attack but show how to find violations of
constant-time that could potentially be exploited to mount such
attack.

We focus on the function \texttt{tls-cbc-remove-padding} which checks
and removes the padding of the decrypted record.  We extract the
vulnerable version from
OpenSSL-1.0.1\textsuperscript{\ref{fnote:lucky13}}
and its patch from~\cite{almeida_verifying_2016}. Finally, we check
that no information is leaked during the padding check by specifying
the record data as private.

On the \emph{insecure version}, \brelse{} accurately reports 5 violations,
and for each violation, returns the address of the faulty instruction,
the execution trace which can be visualized with IDA, and an input
triggering the violation. For instance, on the portion of code in
\Cref{lst:remove_padding}, three violation are reported: two
conditional statement depending on the padding length at lines 3 and
4, and a memory access depending on the padding length at line 4. For
the conditional at line 3, when the length \cinline{LEN} of the record
data is set to 63, %
\brelse{} returns in 0.11s the counterexample %
``\cinline{data_l[62]=0; data_r[62]=16}'', %
meaning that an execution with a padding length set to 0 will take a
different path that an execution with a padding length set to 16.

\begin{center}
  \begin{minipage}{.9\linewidth}
\begin{ccode}[numbers=left,caption={Padding check in OpenSSL-1.0.1},label={lst:remove_padding}]
pad_len = rec->data[LEN-1]; // Get padding length
[...]
for (i = LEN - pad_len; i < LEN; i++)
  if (rec->data[i] != pad_len)
    return -1; // Incorrect padding
\end{ccode}    
\end{minipage}
\end{center}

On the \emph{secure version}, when the length \cinline{LEN} of the record
data is set to 63, \brelse{} explores all the paths in 400s and
reports no vulnerability.

\end{document}
